\documentclass[11pt, letter]{article}

\usepackage{stolenstyle}

\usepackage{amsmath,epsf,amssymb,latexsym,amsthm,setspace,bbm,array,pifont,ulem,enumerate}

\usepackage[matrix,arrow,frame,import,curve,color]{xy}
\xyoption{arc}



%
\DeclareFontFamily{U}{rsf}{}
\DeclareFontShape{U}{rsf}{m}{n}{
  <5> <6> rsfs5 <7> <8> <9> rsfs7 <10-> rsfs10}{}
\DeclareMathAlphabet\Scr{U}{rsf}{m}{n}



\def\iden{{\mathbbm 1}}


\def\cDb{{\overline{\cD}}}

\def\GUL{\GU(1)_{\text{L}}}
\def\GUR{\GU(1)_{\text{R}}}


\def\C{{\mathbb C}}

\def\P{{\mathbb P}}
\def\R{{\mathbb R}}
\def\Z{{\mathbb Z}}


\def\End{\operatorname{End}}

\def\Hom{\operatorname{Hom}}

\def\Pic{\operatorname{Pic}}

\def\Vol{\operatorname{Vol}}

\def\deg{\operatorname{deg}}

\def\rank{\operatorname{rank}}



\def\GL{\operatorname{GL}}

\def\GU{\operatorname{U{}}}



\def\p{\partial}

\def\la{\langle}
\def\ra{\rangle}


\def\cA{{\cal A}}
\def\cB{{\cal B}}
\def\cC{{\cal C}}
\def\cD{{\cal D}}
\def\cE{{\cal E}}

\def\cJ{{\cal J}}

\def\cL{{\cal L}}
\def\cM{{\cal M}}

\def\cO{{\cal O}}
\def\cP{{\cal P}}

\def\cS{{\cal S}}

\def\cW{{\cal W}}





\newcommand\rhot{\widetilde{\rho}}
\newcommand\sigmat{\widetilde{\sigma}}


\newcommand\vphi{\varphi}





\newcommand\Gammab{\overline{\Gamma}}

\newcommand\Gammat{\widetilde{\Gamma}}

\newcommand\Sigmat{\widetilde{\Sigma}}











\newcommand\Eh{\widehat{E}}


\newcommand\Db{\overline{D}}

\newcommand\Vb{\overline{V}}


\newcommand\Jt{\widetilde{J}}

\newcommand\Mt{\widetilde{M}}
\newcommand\Nt{\widetilde{N}}

\newcommand\Xt{\widetilde{X}}

\def\inter{\operatorname{int}}
\def\Gr{\operatorname{Gr}}

\def\br{{\boldsymbol{r}}}
\def\bE{{\boldsymbol{E}}}

\def\cDb{{{\overline{\cD}}}}
\def\cMb{{{\overline{\cM}}}}

\def\mon{{{\mathsf{M}}}}

\def\F{{\mathbb F}}

\def\nSUSY{{{\text{\sout{SUSY}}}}}

\def\relint{\operatorname{relint}}
\def\MV{\operatorname{MV}}
\def\Conv{\operatorname{Conv}}

\def\tsing{{\text{sing}}}

\newtheorem{thm}{Theorem}[section]
\newtheorem{lem}[thm]{Lemma}

\title{Global aspects of (0,2) moduli space: toric varieties and tangent bundles}
\author[a] {Ron Donagi,}
\author[b] {Zhentao Lu,}
\author[c] {and Ilarion V.~Melnikov}
\affiliation[a]{Department of Mathematics \\
University of Pennsylvania, Philadelphia, PA 19104, USA}
\affiliation[b]{Mathematical Institute \\ University of Oxford, Oxford OX2 6GG, UK}
\affiliation[c]{Department of Mathematics \\ Harvard University, Cambridge, MA 02138, USA}
\emailAdd{donagi@math.upenn.edu}
\emailAdd{Zhentao.Lu@maths.ox.ac.uk}
\emailAdd{ilarion@math.harvard.edu}

\abstract{We study the moduli space of A/2 half-twisted gauged linear sigma models for NEF Fano toric varieties.  Focusing on toric deformations of the tangent bundle, we describe the vacuum structure of many (0,2) theories, in particular identifying loci in parameter space with spontaneous supersymmetry breaking or divergent ground ring correlators.  We find that the parameter space of such an A/2 theory and its ground ring is in general a moduli stack, and we show in examples that with suitable stability conditions it is possible to obtain a simple compactification of the moduli space of smooth A/2 theories.}


\begin{document}

\maketitle

\section{Introduction}\label{s:intro}
Over its nearly thirty-year history the heterotic string has played an important role in string theory and mathematical physics for two key reasons: it plays a fundamental role in the web of string dualities and leads to a vast landscape of string-perturbative N=1 supersymmetric d=4 vacua, where the internal degrees of freedom are described by a d=2 (0,2) superconformal field theory (SCFT), and the low-energy spacetime physics is a chiral gauge theory coupled to supergravity.  A typical construction of this sort has a large number of marginal deformations, and in string perturbation theory
many of these correspond to exactly flat directions.  The geometry of the resulting moduli space contains a great deal
of information about the SCFT and the spacetime physics of the compactification, but even in very concrete models it is
difficult to describe.  It is a sobering thought that to date there are no good methods to compute the moduli space metric in
genuine (0,2) theories!  Present techniques do allow us to examine the moduli space as a complex variety, 
to characterize singular loci, provide natural compactifications and to relate different (perhaps dual) descriptions of the same SCFT.  For example, such a description has played a role in recent developments in (0,2) mirror symmetry~\cite{Melnikov:2012hk}.

The goal of this work is to begin a systematic study of these (0,2) moduli spaces in the context of (0,2) gauged linear sigma models (GLSMs)~\cite{Witten:1993yc,Distler:1993mk}, paying special attention to global structures and interpretations in algebraic geometry.  Various aspects of these theories have been examined before.  For instance, local properties of the deformation space and (0,2) resolutions of singularities were considered in~\cite{Distler:1996tj,Chiang:1997kt}, and many efforts were focused on world-sheet instanton obstructions to classically marginal deformations (for instance~\cite{Silverstein:1995re,Berglund:1995yu,Basu:2003bq,Beasley:2003fx}).  

In this paper we consider a model problem:  the torus-equivariant (or ``toric'' for short) deformations of the tangent bundle of a smooth Fano toric variety $X$.  Such a bundle, together with a choice of complexified K\"ahler class on $X$,  defines a (0,2) GLSM and an associated A/2 quasi-topological theory~\cite{McOrist:2008ji}.\footnote{There are no complex structure deformations since smooth toric Fano varieties are rigid.  This statement and various generalizations are reviewed in~\cite{MR2931867}.} Our goals are: (i) provide a global description of the deformation space that unifies the bundle and K\"ahler data; (ii) identify singular loci; and (iii) describe possible compactifications of the space of smooth (0,2) theories.
Our global approach follows from the point of view developed in~\cite{Kreuzer:2010ph}, itself a natural generalization of the presentation of the ``polynomial'' deformations of complex structure for (2,2) theories based on Calabi-Yau hypersurfaces developed in~\cite{Aspinwall:1993rj} and reviewed in~\cite{Cox:2000vi}.  Along the way we obtained a number of results in toric geometry and combinatorics that we will quote below, leaving details of the proofs to the companion paper~\cite{DonagiLuIVM2}~.

A GLSM for a Fano toric variety does not lead to a (0,2) SCFT: for generic choice of bundle deformation parameters the theory flows to a gapped (0,2) theory, while for special degenerate values the result is rather a gapless theory with spontaneously broken supersymmetry.\footnote{This is not in conflict with Witten index/elliptic genus computations precisely because at these degenerate values the asymptotics of the scalar potential are modified~\cite{Tong:2008qd}; see also~\cite{Gadde:2013lxa} for a recent discussion of spontaneous SUSY breaking in the context of non-abelian GLSMs.  Note that these theories all have a non-anomalous vectorial R-symmetry, which forbids deformations of the SUSY current algebra recently discussed in~\cite{Dumitrescu:2011iu}.}  Nevertheless, there are good reasons to study these theories.  First, experience with GLSMs has shown that results for gapped models associated to compact projective toric varieties often carry over to GLSMs for Calabi-Yau geometries; this is true both in the (2,2) ~\cite{Witten:1993yc,Morrison:1994fr} and (0,2)~\cite{McOrist:2007kp,McOrist:2008ji} cases.  
Second, the massive theories are interesting in their own right:  while one complexified K\"ahler parameter is transmuted into a scale, the theories still have a rich vacuum structure that depends on the remaining parameters.  This is of interest for physical applications, where these massive theories can describe low energy excitations of solitonic strings and a wide variety of surface defects in four-dimensional gauge theories (see, e.g.~\cite{Tong:2008qd,Gaiotto:2013sma} and references therein).  Moreover, the topological heterotic ring~\cite{Adams:2003zy,Katz:2004nn,Adams:2005tc}, the ground ring of the quasi-topological A/2 model~\cite{McOrist:2008ji}, leads to the theory of quantum sheaf cohomology, a generalization of quantum cohomology, that was computed for tangent bundle deformations of  compact toric varieties in~\cite{McOrist:2007kp} and given a rigorous mathematical framework in~\cite{Donagi:2011uz,Donagi:2011va}.  Finally, and most pragmatically, this class of models, while sharing many features with the more elaborate GLSMs that flow to non-trivial SCFTs, has a deformation space that, on one hand can be given a very general description, and on the other hand is readily studied in concrete examples.  

Before plunging into the details we end this introduction with a summary of our main results on (0,2) torus-equivariant deformations of (2,2) GLSMs associated to a smooth Fano toric variety $X$.

Our first statement concerns the classical algebraic geometry of these deformations, which also describes the classical large radius limit of the GLSM.  Toric vector bundles have a complete but complicated description due to Klyachko~\cite{Klyachko:1989eb}: see~\cite{Knutson:1997yt,Payne:2008tb} for a modern exposition and references therein for a history of the subject.  We show that all toric deformations of the tangent bundle can be realized by deforming the Euler short exact sequence that defines the tangent sheaf --- physically, this means that every such deformation can be realized in the GLSM.  The relationship between our presentation and that of~\cite{Payne:2008tb} is rather like that of the homogeneous coordinate ring of a toric variety to the affine patch data:  while some features are much simpler to understand from the latter point of view, many global features are more easily grasped from the former perspective.

Next we consider the quantum theory, where in addition to the bundle deformations we must take account of the complexified K\"ahler parameters.  The important lesson is that the two must be considered together; an invariant distinction between the two classes of data only arises upon taking certain classical limits (analogous to a large radius limit in Calabi-Yau compactification).  The result is a moduli stack, presented as a quotient $\{\Gr(k,k+d) \times T_k\} / T_{k+d}$, where $d = \dim X$,  $k = h^{1,1}(X)$, and $T_n = (\C^\ast)^n$.  A generic point in this stack leads to a smooth theory, but there is a $T_{k+d}$--invariant locus $\cA_{\tsing} \subset \Gr(k,k+d) \times T_k$ where the theory is singular --- i.e. the A/2 correlators diverge.  The moduli stack is not a separated variety, so that its geometric properties are a bit obscure.   Fortunately, as we show in examples, it is possible to choose subsets $F \subset \cA_\tsing$ such that $\cMb(X) = \{ \Gr(k,k+d) \times T_k \setminus F\}/T_{k+d}$ is a separated variety that is a compactification of the space of smooth theories $\cM(X)$.   Interestingly, for every choice of $F \subset D$ the singular points in $\cMb(X)$ include parameter values where SUSY is spontaneously broken.

The rest of the paper is organized as follows.  After a brief review of (2,2) toric GLSMs in section~\ref{s:basics},  we discuss their toric (0,2) deformations in section~\ref{s:toricdefs}, first describing classical bundle structure, and then turning to the quantum theory.  Having described the general structure, we illustrate it in a number of examples in section~\ref{s:examples}.  We end with an outlook on future directions.  The appendices deal with some technical aspects of the study.  In~\ref{app:nonlin} we present a non-renormalization theorem which shows that the so-called ``non-linear $E$-deformations'' do not affect the A/2 half-twisted theory.\footnote{This was developed in collaboration with M.R.~Plesser.}  Appendix~\ref{app:elgen} contains computations of elliptic genera which support our assertions regarding spontaneous SUSY breaking.  Appendix~\ref{app:bernstein} shows that for generic parameter values the number of solutions to the quantum sheaf cohomology relations for a NEF Fano variety $X$ is given by the Euler characteristic of $X$.  Finally, appendix~\ref{app:nonsusy} contains some details pertinent to the examples in section~\ref{s:examples}.

\acknowledgments  It is a pleasure to thank V.~Braun and E.~Sharpe  for useful discussions.  IVM would like to especially thank M.R.~Plesser for collaboration on the results presented in appendix~\ref{app:nonlin}.

RD is partially supported by NSF grant DMS 1304962.
ZL is supported by EPSRC grant EP/J010790/1.  IVM is grateful to the Mathematics Department at the University of Pennsylvania for hospitality while this work was undertaken; he is partially supported by NSF Grant DMS-1159404 and Texas A\&M.  
\section{Gauged linear sigma models and toric varieties} \label{s:basics}
The geometric object underlying our study is a smooth compact $d$-dimensional toric variety $X$.\footnote{A concise review of the relevant toric geometry and the relation to GLSMs may be found in~\cite{Kreuzer:2010ph}; a wealth of toric details may be found in~\cite{Cox:2011tv}.  While the assumption of smoothness simplifies the analysis, many of our results will generalize to simplicial toric varieties.}  The combinatorial data for $X$ is encoded in the toric fan $\Sigma_X \subset N_{\R} = N\otimes_{\Z} \R$, where $N$ is the lattice of one-parameter subgroups of the algebraic torus $T_N = (\C^\ast)^d \subset X$.\footnote{$X$ is smooth if and only if every $d$-dimensional cone in $\Sigma_X$ is simplicial and unimodular; it is compact if and only if $\Sigma_X$ generates $N_{\R}$.} The lattice of characters for the torus action is $M= N^\vee$, and we use $\la \cdot,\cdot \ra$ for the natural pairing.  We denote the integral generators of the one-dimensional cones in the fan by $\rho \in \Sigma_X$ and the full collection of $\rho$ by $\Sigma_X(1)$, with $n = |\Sigma_X(1)|$.  There are a number of ways to relate the combinatorial data to geometry; the one of most immediate use to us is based on the Cox homogeneous coordinate ring and the holomorphic quotient construction of $X$~\cite{MR1299003}.  Let us review the main aspects of this construction in three easy steps, at each step indicating the analogous step in the construction of an abelian gauged linear sigma model for $X$.
\begin{enumerate}
\item (coordinates and fields) To each $\rho \in \Sigma_X(1)$ we associate a coordinate $z_\rho$ on $\C^n$ and a generator of $S = \C[z_{\rho_1},\ldots,z_{\rho_n}]$. In the field theory this simply means we write down a Lagrangian for $n$ free (2,2) chiral superfields $\Phi_\rho$ with lowest components $z_\rho$.\footnote{Our superspace conventions will be those of~\cite{McOrist:2008ji,Kreuzer:2010ph}; we will need few details beyond the basic structure.}
\item (grading and gauge group)   
$\C^n$ admits a natural action of a ``big torus,''  $(\C^\ast)^{n}$, where for any $t\in (\C^\ast)^{n}$
$$t \cdot (z_{\rho_1},\ldots, z_{\rho_n}) = (t_{\rho_1} z_{\rho_1},\ldots, t_{\rho_n} z_{\rho_n}),$$
and this defines an abelian group $G_{\C}$ via the exact sequence\footnote{The reader may perhaps be familiar with the equivalent definition $G_{\C} = \Hom(\Pic(X), \C^\ast)$.}
\begin{align}
\xymatrix{ 1 \ar[r] & G_{\C} \ar[r] & (\C^\ast)^n \ar[r]^-{\rhot} & T_N \ar[r] & 1}~,\qquad
\rhot : t \mapsto (\prod_\rho t_{\rho}^{\rho^1},\ldots,\prod_\rho t_{\rho}^{\rho^d} )~.
\end{align}
For smooth $X$ $G_{\C} = (\C^\ast)^{k}$ with $k=n-d$, and clearly $G_{\C}$ inherits (from the big torus) an action on the $z_{\rho}$: $(\tau_1,\ldots,\tau_{k}) \cdot z_{\rho} = \prod_{a=1}^{k} \tau_a^{Q^a_\rho} z_\rho$.  The matrix of charges $Q^a_\rho$  is integral and grades the coordinate ring $S$. 

In the field theory we gauge the action of $G = \GU(1)^{k}$ on the $\Phi_\rho$ with charges $Q^a_\rho$.

\item (exceptional set and FI parameters)  The construction of $X$ as a holomorphic quotient requires one more ingredient --- the exceptional set $F$, a union of intersections of hyperplanes in $\C^n$:  for each minimal collection $\{\rho_i\}_{i\in I}$ that does not belong to a full-dimensional cone (sometimes known as a primitive collection; see, e.g.~\cite{Donagi:2011va,Donagi:2011uz}), the exceptional set $F$ includes $\cap_{i \in I} \{z_{\rho_i} = 0\}$.  When $\Sigma_X$ is simplicial, $X$ is presented as a geometric quotient
\begin{align}
X = \{ \C^n - F\} / G_{\C}~.
\end{align}
The gauge theory encodes the choice of exceptional set via the choice of Fayet-Iliopoulos parameters $\br = (r^1,\ldots, r^{k})$ for the $k$ abelian vector multiplets.  More precisely, there is a cone $\cC_{\text{cl}} \subset \R^{k}$ generated by the $n$ columns of the charge matrix $Q^a_\rho$ where the classical $D$-terms 
\begin{align}
D^a = \sum_{\rho} Q^a_\rho |z_\rho|^2 - r^a = 0~,\qquad a=1,\ldots,k,
\end{align}
have a non-empty solution set for $z_\rho$.  For compact $X$ $\cC_{\text{cl}}$ is a pointed---also known as strongly convex---cone.  $\cC_{\text{cl}}$ is further subdivided into full-dimensional cones (these are ``phases'' in the usual GLSM language), giving $\cC_{\text{cl}}$ the structure of a fan --- this is the secondary fan associated to $X$.  Each phase corresponds to a choice of exceptional set, and, in particular, there is a cone $\cC_X$ such that when $\br$ is in the interior of $\cC_X$ (denoted by $\inter(\cC_X)$ in what follows) then the symplectic quotient $\{(D^a)^{-1}(0)\} /\!\!/G$ is isomorphic as a K\"ahler manifold to $X$; the remaining phases correspond to other toric varieties birational to $X$.  
\end{enumerate}
We should mention one more key geometric structure.  For smooth $X$ the group of divisors modulo linear equivalence is generated by the toric divisors $D_\rho$ that correspond to images under the quotient of the hypersurfaces $\{z_\rho = 0\}$.  $H^2(X,\C) = \Pic(X)\otimes_{\Z} \C$ is generated by $\xi_\rho$, the classes dual to the $D_\rho$, which can be expanded in an integral basis $\{\eta_1,\ldots,\eta_{k}\}$ for $\Pic(X)$ as $\xi_\rho = \sum_a Q^a_\rho \eta_a$.  In particular, the canonical class is given by 
\begin{align}
K_X = -\sum_\rho \xi_\rho = -\sum_a \Delta^a\eta_a~,
\end{align}
where $\Delta^a = \sum_\rho Q^a_\rho$.  A smooth complete toric variety $X$ is NEF Fano if and only if $\Delta^a$ lies in the closure of $\cC_X$;\footnote{A variety $X$ is NEF Fano if and only if $X$ is complete, and the anti-canonical divisor  is NEF, i.e. has a non-negative intersection with every curve in $X$.}  $X$ is Fano if and only if $X$ is complete and $\Delta^a$ lies in the interior of $\cC_X$.  For later convenience we will fix the notation $W = H^2(X,\C)$.

\subsection{(2,2) toric GLSMs : quantum aspects}
Having set the notation and reviewed the basic correspondence between gauge theory and geometric data, we now turn to some aspects of the quantum theory.  The gauge theory has a dimensionful coupling constant $e$, meaning that the Lagrangian presentation in terms of chiral fields $\Phi_\rho$ coupled to vector multiplets $V_a$ is a good description at energy scales $\mu\gg e$.  Classically, when the parameters $r^a$ are deep in $\inter (\cC_X)$ we can reliably integrate out the gauge fields and obtain a description of the light degrees of freedom as a (2,2) NLSM with target space $X$.   

 Quantum mechanically the story is modified in a number of important ways.  First, the FI parameters naturally combine with the $\theta$ angles into complex parameters $q_a = e^{-2\pi r^a + i\theta^a}$ in a twisted superpotential.  In the NLSM description these correspond to the complexified K\"ahler parameters of the geometry.  Moreover, the FI parameters are not invariant under the RG flow.  Rather, under a change of scale $\mu_0\to \mu$ we find a one-loop running of the holomorphic couplings $q_a$ :
 \begin{align}
 \label{eq:1loop}
 q_a (\mu) = q_a(\mu_0) \left(\frac{\mu_0}{\mu}\right)^{\Delta^a}~,\qquad \Delta^a = \sum_{\rho} Q^a_\rho~.
 \end{align}  
This  has a direct analogue in the geometry: the running of the complexified K\"ahler parameters is determined by the anti-canonical class, which explains the reason for our insistence on $X$ being NEF Fano.  The positivity of $-K_X$ ensures that we can choose parameters such that in the UV limit $X$ will be driven to a smooth large radius limit, so that we can reliably match the UV physics of the GLSM and the NLSM for a large smooth manifold.  We will make a few comments on more general toric $X$ in section~\ref{s:discuss}.

Another important modification is an IR puzzle whose resolution is also key to most of the remarkable simplifications offered by the GLSM~\cite{Witten:1993yc,Morrison:1994fr}.  Since $\cC_{\text{cl}}$ is a pointed cone and $X$ is NEF Fano, the RG flow will eventually drive the FI parameters to a regime where classically supersymmetry is broken, even though the theory has a non-trivial Witten index---the Euler number of $X$.  The resolution of the puzzle is provided by the $\sigma$-vacua.  Recall that each (2,2) vector multiplet $V_a$ contains a complex scalar $\sigma_a$,\footnote{The common mathematics notation of $\sigma$ for cones in various fans is in delightful conflict with the standard physics notation of the $\sigma$ fields.  We hope the context makes it clear which one is meant.}
and the classical Lagrangian includes a term 
\begin{align}
\label{eq:sigmaz22}
\cL \supset 2\sum_{\rho} |z_\rho|^2 \left| \textstyle\sum_{a} Q^a_\rho \sigma_a \right|^2~,
\end{align}
indicating that a large $|\sigma/\mu|$ expectation value induces a mass for the chiral fields.  Assuming that the mass is large, we can integrate out the $\Phi_\rho$ at one loop and obtain an effective superpotential for the twisted chiral superfields $\Sigma_a = \sigma_a+\ldots$.  The critical points of this superpotential are solutions to
\begin{align}
\label{eq:QCR}
\prod_\rho \left(\mu^{-1} \textstyle\sum_{b=1}^{k} Q^b_\rho \sigma_b\right)^{Q^b_\rho}  = q_a(\mu) ~.
\end{align}
The approximation of large $\sigma$ vevs is self-consistent for NEF Fano $X$ when $q_a(\mu) \to \infty$, while when $q_a(\mu) \to 0$, and the NLSM is a good description of the low energy physics, the $\sigma$ vevs are small.  Hence, the ground states are reliably described by a set of massive ``Coulomb'' vacua labeled by solutions to~(\ref{eq:QCR}), while in the UV the NLSM provides a good description.\footnote{There are a number of reasons to think that the $\sigma$-vacua exhaust the supersymmetric ground states.  For instance, in examples it is easy to ascertain that the number of solutions to~(\ref{eq:QCR}) counted with multiplicity is indeed given by the Euler number of $X$.  The general proof for smooth NEF Fano $X$ is given in appendix~\ref{app:bernstein}.  A general discussion of Higgs and Coulomb vacua in various GLSM phases may be found in~\cite{Melnikov:2006kb}.} 

\subsection{A geometric interpretation}
The physics is related to a number of  geometric structures.  The connection is most easily made in the context of the A-model topological field theory associated to the GLSM, where one uses the non-anomalous vectorial $\GU(1)$ R-symmetry to twist the theory.  In this case we can forget about the RG running and the scale $\mu$ and concentrate on a set of local topological observables --- the operators $\sigma_a$, their relations and correlators.  In the geometric phase the $\sigma_a$ correspond to the generators $\eta_a$ of $H^{1,1}(X)$; the chiral ring relations~(\ref{eq:QCR}) are the quantum cohomology relations\footnote{For Fano $X$ they reduce to the Stanley-Reisner relations in the classical $q\to 0$ limit.}; and correlators $\la \sigma_{a_1} \cdots \sigma_{a_s} \ra$ are generating functions for genus zero Gromov-Witten invariants~\cite{MR1265307,Morrison:1994fr}.  The Coulomb branch description offers the most economical way to arrive at the quantum cohomology relations and to compute correlators without performing sums over gauge instantons~\cite{Melnikov:2006kb}.

\section{Toric (0,2) deformations} \label{s:toricdefs}
Having reviewed the relevant (2,2) structures, we will now turn to our real interest:  the (0,2) deformations.  As in the previous section, we will first discuss some geometrical aspects and their interpretation in terms of GLSM data.  We will then discuss some key quantum issues that will enable us to describe the (0,2) GLSM parameter space.

\subsection{Monadic and toric deformations}
The tangent sheaf of any projective toric variety can be described via the Euler sequence, an exact sequence of sheaves
\begin{align}
\label{eq:Euler}
\xymatrix{0 \ar[r] & W^\ast\otimes \cO_X \ar[r]^-{\bE} & \oplus_\rho \cO_X(D_\rho) \ar[r] & T_X \ar[r] & 0}~,
\end{align}
where the map $\bE$ is an $k\times n$ matrix $\bE^a_\rho = Q^a_\rho z_\rho$~.  This explicit presentation offers a way to obtain a family of deformed bundles $\cE \to X$ simply by deforming the maps $E^a_\rho$:
\begin{align}
\bE^a_\rho \to Q^a_\rho z_\rho + s^a_\rho(z)~,
\end{align}
where $s^a_\rho \in H^0(X,\cO_X(D_\rho))$. 

This presentation has a number of advantages to other approaches.  For instance, we might begin more abstractly by first computing $H^1(X,\End T_X)$ to obtain a parametrization of the first order deformations.  This is complicated even in relatively simple examples~(see~\cite{DonagiLuIVM2} for some explicit computations); moreover, there may be higher order obstructions to some of these first-order deformations.  By contrast, the ``monadic deformations'' obtained by deforming the Euler sequence are easy to count and are obviously unobstructed.\footnote{A monad bundle is one that can be expressed as a cohomology of a three-step complex.   These are the natural bundles obtained in GLSM constructions, where each step is a sum of toric line bundles. These are well-suited to studies via computational algebraic geometry.  The tangent bundle is a particularly simple example: the complex only involves two bundles.}

Not all unobstructed deformations of $T_X$ are monadic.  For instance, for $X= dP_3$ $h^1(\End T_X) = 15$ and $h^2(\End T_X) = 0$, so that we expect a $15$-parameter family of deformed bundles~\cite{DonagiLuIVM2}; however, only three of those parameters are captured by deforming the Euler sequence.

$T_X$ is a torus-equivariant bundle.  For an extreme example, for $X = \P^1$ and $T_X = \cO(2)$, and the total space is a toric variety, with a fan consisting of the two cones $\sigma_1$ and $\sigma_2$ in $N_{\R}$, as illustrated in figure~\ref{fig:TXP1}:\footnote{A torus-equivariant bundle is a toric variety if and only if it is a sum of line bundles~\cite{Cox:2011tv}, so typically the total space of $T_X$ will not be a toric variety.}
\begin{figure}[t]
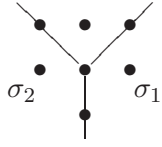

\[
\xy (0,0)*\xybox{<0.6mm,0.0mm>:
(0,0)*{\bullet}="0";
(10,10)*{\bullet}="1"; 
(-10,10)*{\bullet}="2"; 
(14,-5)*{\sigma_1};
(-14,-5)*{\sigma_2};
(10,0)*{\bullet};
(0,10)*{\bullet};
(-10,0)*{\bullet};
(0,-10)*{\bullet}="4";
{\ar@{-}"0";(15,15)};
{\ar@{-}"0";(-15,15)};
{\ar@{-}"0";(0,-15)};
}
\endxy
\]
\caption{Tangent bundle of $\P^1$.}
\label{fig:TXP1}
\end{figure}
General monadic deformations do not respect the toric structure;  to preserve torus equivariance we need 
$\bE^a_\rho = E^a_\rho z_\rho$, i.e. the deformation simply replaces $Q^a_\rho$ with a more general complex-valued matrix $E^a_\rho$.
There is a class of Fano $X$, known as ``plain'' toric varieties, for which $H^0(X,\cO_X(D_\rho))$ is just one-dimensional, so that every monadic deformation is toric~\cite{Kreuzer:2010ph}.\footnote{The combinatorial condition for this is that the facets of the reflexive polytope $\Delta$ associated to $\Sigma_X$ contain no points in their relative interior.} For instance, $dP_3$ is plain, while $\P^1\times \P^1$ is not.  However, we can show that for any projective toric variety every torus-equivariant (or toric for short) deformation of $T_X$ is monadic~\cite{DonagiLuIVM2}.  To summarize, we have the following containments for deformations of the tangent sheaf:
\begin{align*}
\text{toric} \subseteq \text{monadic} \subseteq \text{all deformations}~.
\end{align*}
The first containment is an equality for plain Fano toric varieties.  We do not know the conditions under which the second containment becomes an equality.

The deformed Euler sequence defines a sheaf that can fail to be locally free on some subvariety in $X$.  A bundle is obtained if and only if $\bE$ has full rank at every point on $X$; while this is assured for sufficiently small deformations around $E^a_\rho = Q^a_\rho$, it will fail for a sufficiently large deformation.  A particularly drastic degeneration occurs when the matrix $E^a_\rho$ drops rank, so that $\bE$ fails to have full rank at every point on $X$.

\subsection{$E$ couplings in the GLSM}
The preceding geometric structure encodes the monadic (0,2)--preserving deformations of the (2,2) GLSM for $X$.  Splitting up the (2,2) chiral and twisted chiral multiplets into (0,2) ones, we have
\begin{align}
\Phi_{\rho} &\to Z_\rho~,\Gamma_\rho~&
\Sigma^{(2,2)}_a &\to \Sigma_a~, \Upsilon_a~,
\end{align}
where $Z_\rho = z_{\rho} + \ldots$ and $\Sigma_a = \sigma_a + \ldots$ are (0,2) bosonic chiral superfields, while $\Gamma_\rho$ and $\Upsilon_a$ are fermi (0,2) superfields with lowest component a left-moving Weyl fermion; $\Upsilon_a$ contains the curvature of the $a$-th gauge field.  While $\Upsilon_a$ is chiral, i.e. it is annihilated by the $\cDb$ superspace derivative, $\Gamma_\rho$ satisfies a more general constraint:
\begin{align}
\cDb \Gamma_\rho = i \sqrt{2} \sum_{a=1}^{k} \Sigma_a Q^a_\rho Z_\rho~.
\end{align}
The monadic (0,2) deformations simply replace this with a more general coupling consistent with the gauge symmetries and (classical) $\GUL$ symmetry which assigns charges $-1$ to $\Gamma_\rho$ and $\Sigma_a$ and leaves all other fields invariant:
\begin{align}
\label{eq:Ecoupling}
\cDb \Gamma_\rho = i\sqrt{2} \sum_{a=1}^{k} \Sigma_a \bE^a_\rho(Z_\rho)~.
\end{align}
A few comments are in order.
\begin{enumerate}
\item $\bE^a_\rho$ is a chiral field in order to be consistent with $\cDb^2 = 0$.
\item $\bE^a_\rho$ is part of the holomorphic data of this supersymmetric theory.  This is particularly clear for toric GLSMs: up to relabeling $\Gamma$ and its conjugate $\Gammab$, an isomorphic theory is obtained by using chiral $\Gamma^\rho$ and a (0,2) superpotential 
$$\cL_{\cW} \supset i\sqrt{2} \int d\theta \sum_{\rho,a} \Gamma^\rho \Sigma_a \bE^a_\rho(Z_\rho).$$ 
This point of view is used in appendix~\ref{app:nonlin} to show that smooth A/2 theories are independent of terms that are non-linear (in $Z_\rho$) in the $\bE^a_\rho$.
\item The interaction leads to a scalar potential contribution generalizing~(\ref{eq:sigmaz22}) to
\begin{align}
\label{eq:sigmaz02}
\cL \supset 2\sum_{\rho}\left| \textstyle\sum_{a} \bE^a_\rho(z) \sigma_a \right|^2~.
\end{align}
\item We do not include terms of higher order in $\Sigma_a$ in order to preserve the asymptotics of the scalar potential and selection rules that follow from the $\GUL$ symmetry.  In particular, this means that for generic $\bE^a_\rho$ integrating out the gauge degrees of freedom and massive $\Sigma_a$ multiplets leads to low energy degrees of freedom described by a (0,2) NLSM for the monad bundle $\cE \to X$ defined by the deformed Euler sequence~(\ref{eq:Euler}).
\item $\GUL$ is anomalous, and only a $\Z_{\gcd(\Delta^1,\ldots,\Delta^k)}$ subgroup remains as a symmetry. 
\end{enumerate}

\subsection{(0,2) SUSY breaking and $\sigma$-vacua}
The (0,2) GLSM defined by the $E$-deformation has a classical $\GUL\times\GUR$ symmetry and (0,2) supersymmetry.  While either $\GUL$ or $\GUR$ is anomalous, there is a vectorial R-symmetry $\GU(1)_{\text{V}}$.  This symmetry ensures that the classical (0,2) SUSY algebra is undeformed, so that it is sensible to ask whether (0,2) SUSY is spontaneously broken.\footnote{The most general deformation of the (0,2) SUSY current algebra was derived in~\cite{Dumitrescu:2011iu}. } 

As in the (2,2) theory, examination of the classical theory suggests both a SUSY puzzle and its resolution~\cite{McOrist:2007kp}.  While for $\br \in \cC_{\text{cl}}$ we find classical SUSY vacua, when $\br \not\in \cC_{\text{cl}}$ it appears that SUSY is broken.  Of course this cannot happen for generic parameter values due the non-vanishing Witten index and elliptic genus,\footnote{The latter can now be computed directly for  large classes of (2,2) and (0,2) GLSMs~\cite{Benini:2013nda,Gadde:2013dda,Benini:2013xpa} .} and the resolution is again provided by the ``Coulomb vacua'' where the $\sigma_a$ acquire expectation values and give large masses to the fields $Z_\rho,\Gamma^\rho$.  The latter can be integrated out, yielding an effective superpotential for the $\Sigma_a$ and $\Upsilon_a$ multiplets.  The general monadic deformations can be decomposed as follows:
\begin{align}
\bE^a_\rho = \sum_{\rho'} E^a_{\rho,\rho'} z_{\rho'} + F^a_\rho(z)~.
\end{align}
A non-zero linear term $E^a_{\rho,\rho'}$ is only allowed if $z_{\rho}$ and $z_{\rho'}$ have the same gauge charges, and $F^a_\rho$ is composed of any non-linear monomials in the $z$ that carry the same charge as $z_{\rho}$.  The toric deformations correspond to $E^{a}_{\rho\rho'} = E^a_{\rho} \delta_{\rho,\rho'}$ and $F^a_\rho = 0$.   In that case, it is easy to specialize the result of~\cite{McOrist:2007kp} to obtain the effective (0,2) superpotential $\int d\theta \Upsilon_a J_a(\Sigma)$, with
\begin{align}
J_a = -\frac{1}{8\pi i} \log \left[ q_a^{-1} \prod_\rho \left( \mu^{-1} \sum_{b=1}^{k} E^b_\rho \sigma_b \right)^{Q^a_\rho}\right]~.
\end{align}  The supersymmetric vacua are then zeroes of $J_a(\sigma) = 0$, or, equivalently, the solutions to the quantum sheaf cohomology relations (QSCR)
\begin{align}
\label{eq:QSCR}
\prod_\rho \left( \mu^{-1} \textstyle\sum_{b=1}^{k} E^b_\rho \sigma_b \right)^{Q^a_\rho} = q_a(\mu)~.
\end{align}
As a consequence, we see that when $\rank(E) < k$, on one hand, a $\Sigma$ multiplet will be free in the IR; on the other hand, for generic $q_a$ (0,2) SUSY will be broken, as~(\ref{eq:QSCR}) will be over-determined.  
More precisely, suppose that the rows of $E^a_\rho$ are linearly dependent---without loss of generality we may assume $E^1_\rho = \sum_{a=2}^k x_a E^a_\rho$.  Setting $\sigmat_a = \sigma_a + x_a \sigma_1$ we then have to solve the $k$ equations
\begin{align}
\prod_\rho \left( \mu^{-1} \textstyle\sum_{b=2}^{k} E^b_\rho \sigmat_b \right)^{Q^a_\rho} = q_a(\mu)~
\end{align}
for the $k-1$ variables $\sigmat_b$.  It is clear that these will not have solutions for generic $q_a$.  A solution exists for special choices of the $q_a$, but the IR dynamics, and in particular the topological heterotic ring, of such a theory will inevitably be singular due to the decoupled $\Sigma$ multiplet.  Since for every such choice of parameters both the classical geometry and the quantum theory are singular, we may as well not include these in our discussion of the parameter space.  In what follows we will therefore focus on full rank E-deformations.  As we will now see, this restricted parameter space still leads to a rich phenomenology.

\subsection{The spectral cover} \label{ss:speccover}
When $E$ has full rank, the QSCR may still degenerate in a number of ways.  Let $\cP=\{(E,q)\}$ denote the set of GLSM parameters.  It is useful to think of the solutions to the QSCR in terms of a spectral cover
\begin{align}
\cS \subset  \{(E,q,\sigma)\} \overset{\pi} \longrightarrow \cP~.
\end{align}
We consider the solution set $\cS$  as a subvariety in $\{(E,q,\sigma)\}$ and illustrate the projection in figure~\ref{fig:cover}.  Given a point $p \in \cP$, we distinguish the following possibilities for $\pi^{-1}(p)$.
\begin{figure}[t]
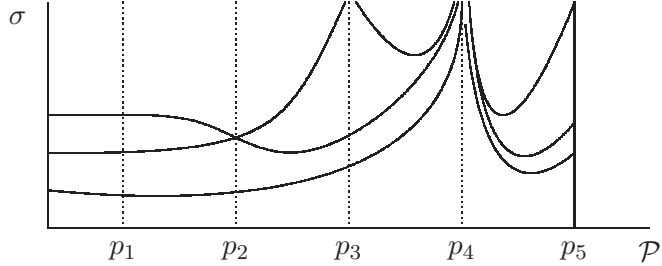

\[
\xy
(0,0)*{}="0";
{\ar@{-}"0";(80,0)};(80,-3)*{\cP};
{\ar@{-}"0";(0,30)};(-4,28)*{\sigma};
{\ar@{.}(10,0);(10,30)};(10,-3)*{p_1};
{\ar@{.}(25,0);(25,30)};(25,-3)*{p_2};
{\ar@{.}(40,0);(40,30)};(40,-3)*{p_3};
{\ar@{.}(55,0);(55,30)};(55,-3)*{p_4};
{\ar@{-}(70,0);(70,30)};(70,-3)*{p_5};
(0,5);(55,30)**\crv{(5,7),(55,0)}; 
(0,10);(39.5,30)**\crv{(10,10)&(30,10)&(35,20)};  
(0,15);(54.5,30)**\crv{(10,15)&(21.6,16)&(33,5)&(52,20)};
(41,30);(54,30)**\crv{(49,15),(50,16)};
(55.5,27);(70,10)**\crv{(58,0)};
(56.0,30);(70,14)**\crv{(58,0)};
(56.0,30);(70,30)**\crv{(58,0)};
\endxy
\]
\caption{Spectral cover with some representative points.}
\label{fig:cover}
\end{figure}
\begin{enumerate}
\item At a generic point, like $p_1$, the pre-image consists of a number of isolated points; this is the degree of the spectral cover.  In NEF Fano examples this number is given by $\chi(X)$.   As discussed in appendix~\ref{app:bernstein}, this follows by applying Bernstein's theorem for counting $\C^\ast$ solutions of Laurent polynomial systems to the QSCR. 
\item $p_2$ illustrates a ramification point, where two solutions merge --- this is a perfectly smooth point as far as the A/2 theory is concerned.\footnote{Such smooth behavior at ramification points where SUSY vacua merge is perhaps familiar from the $B$-twisted (2,2) Landau-Ginzburg theories~\cite{Vafa:1990mu}:  the classic example is the superpotential $W  =  X^{n+1} - t X$ for a chiral superfield $X$, where setting $t = 0$ merges the $n$ SUSY vacua and yields a smooth theory---a minimal model.}
\item The spectral cover fails to be proper at $p_3$, where a solution has wandered off to infinity --- more precisely, the $\pi$--pre-image of a compact set that includes $p_3$ is not compact.  In this case some A/2 correlators will diverge, leading to a singular theory.  The union of all such points yields the singular locus $\cA_{\tsing} \subset \cP$.
\item The spectral cover is not surjective at $p_4$, where all solutions have wandered off to infinity.  The result is spontaneous SUSY breaking for some locus $\cA_{\nSUSY} \subset \cP$.
\item The spectral cover may have a positive dimensional vertical component, as at $p_5$, where a $\sigma$-branch has opened up.  We will denote the locus of such points by $\cA_+ \subset \cP$.  Clearly $\cA_+$ will intersect the closure of $\cA_{\nSUSY}$. 
\end{enumerate}
The points $p_{3,4,5}$ inevitably lead to singular $A/2$ theories --- the correlators diverge as these points are approached.  This is easily seen from~\cite{McOrist:2007kp}, where every A/2 correlator $\la \sigma_{a_1} \cdots \sigma_{a_s}\ra$ is presented as a weighted sum over the solution set to the QSCR.\footnote{  The weight takes the form $\sigma_{a_1} \cdots \sigma_{a_s}$ times a measure factor derived from the superpotential; the form is such that even if some subset of correlators remain finite as a $\sigma_a$ tends to infinity, there will be correlators that diverge.}
We will say that  a point $p \in \cP$ is singular if and only if the A/2 theory is singular; otherwise, we will say $p$ is smooth.  The union of all of these singular points defines the discriminant of the A/2 theory.  For $q_a\in\C^\ast$ the discriminant is a variety that can be described explicitly by applying Bernstein's theorem.  More details are provided in appendix~\ref{app:bernstein} and the references. 

\subsection{A comment on SUSY breaking}
While the preceding statements regarding the A/2 theory and the QSCR are rigorous, we should make a small caveat regarding spontaneous SUSY breaking in the physical theory.
We are claiming to understand the SUSY vacua of a strongly coupled theory based on the effective action for the $\Sigma$ multiplets.  Experience with many examples, for instance the earlier work of~\cite{Tong:2008qd,Gadde:2013dda}, and many checks show that this is quite reasonable, but it is not rigorous, since we do not have good control of the $\Sigma$ kinetic term.  As a check of the proposal, we can compute the flavored elliptic genus of the model following~\cite{Benini:2013nda,Gadde:2013dda,Benini:2013xpa}.  As we show in appendix~\ref{app:elgen}, at least in a simple example we find a consistent picture:  the elliptic genus is non-zero in the SUSY case and vanishes in the non-SUSY case.  

\subsection{Redefinitions, bundles, and quotients: classical GLSM and geometry}
We have learned that toric (0,2) SUSY deformations are characterized by a rank $k$ matrix $E^a_\rho$.  Naively, this would suggest that the deformation space of $T_X$ is described by $nk$ complex parameters.  Geometrically, this is clearly an over-count.  Some of these parameters are simply a choice of basis on $\cO_X \otimes W$, while others can be undone by toric automorphisms of $X$.  A similar picture emerges in matching the GLSM to the (0,2) NLSM.  Many of the $E^a_\rho$ parameters can be absorbed into holomorphic field redefinitions that only modify the presumably irrelevant kinetic terms.  While these may change the details of the metric on $\cE \to X$, we do not expect them to change the universality class, i.e. we expect theories with $E^a_\rho$ related by holomorphic field redefinitions to have the same IR physics. 

The field redefinitions, since we restrict attention to toric deformations, should also be torus--equivariant.  This means we consider redefinitions
\begin{align}
Z_\rho &\to u_\rho Z_\rho~,&
\Gamma^\rho & \to v_\rho \Gamma^\rho~,&
\Sigma_a & \to \sum_{b}  \Sigma_bR_{a}^b~,
\end{align}
where $u_\rho,v_\rho \in \C^\ast$, and $R_a^b \in\GL(k,\C)$. The induced action on $E^a_\rho$, thought of as an $k\times n$ matrix, is fixed from the form of the $E$-couplings in the Lagrangian~(\ref{eq:Ecoupling}).  The result is
\begin{align}
\label{eq:Eaction}
E \mapsto R E T,\qquad 
T = \begin{pmatrix} 
t_{\rho_1} & 0 			& \cdots &0 \\
0		&t_{\rho_2}	& \cdots & 0 \\
\vdots	& \vdots		&\ddots & \vdots \\
0		& 0			& \cdots & t_{\rho_n}
\end{pmatrix}~,
\end{align}
where $t_\rho = u_\rho v_\rho^{-1}$.  We see that not all of the redefinitions act effectively on the $E$-couplings.  For instance, the $u_\rho$ and $v_\rho$ only appear via the combinations $u_\rho v_\rho^{-1}$, and, in addition, setting $R = t^{-1} \iden_{k\times k}$ and $T= t \iden_{n\times n}$ leaves the $E$ couplings invariant.  This is a consequence of the classical $\GUL$ symmetry of the GLSM Lagrangian, which assigns charge $-1$ to the $\Gamma$ and $\Sigma$ superfields and leaves other fields invariant.  

We now have a simple presentation of the $E$ deformation space: we should simply consider the $k\times n$ matrices $E$ modulo the equivalence relations $E \sim RET$.    It is easy to see that expanding the action of the redefinitions around a generic $E$ matrix leads to a a $k(n-k)-n+1$--dimensional space of infinitesimal deformations.  First, we write
\begin{align}
E = \begin{pmatrix} L & | & L S \end{pmatrix}~,
\end{align}
where $L \in \GL(k,\C)$ and $S$ is a $k\times d$ matrix.  Writing $T = A^{-1} \oplus B$, where $A$ is a diagonal $k\times k$ matrix, and $B$ is a diagonal $d \times d$ matrix, we see that we can use the $R$ redefinitions to bring $E$ to a canonical form.  Setting $R = A L^{-1}$, we find
\begin{align}
E \mapsto \begin{pmatrix} \iden_{k\times k} & | & A S B \end{pmatrix}~.
\end{align}
Finally, we can use the remaining $n-1$ effective parameters in $A$ and $B$ to bring $S$ into a canonical form
\begin{align}
S = \begin{pmatrix}  1 & 1 & \cdots & 1 \\ 1 & s_{22} & \cdots & s_{2d} \\ \vdots & \vdots & \vdots \\ 1 & s_{k2} & \cdots & s_{kd} \end{pmatrix}~,
\end{align}
leaving $ (k-1)(d-1) = k(n-k) -n +1$ complex parameters.  

\subsubsection*{A quotient of a Grassmannian}
 Since we work with a full rank $E$, the first quotient via the identification $E \sim R E$ simply gives us the complex Grassmannian $\Gr(k,n)$.  This compact K\"ahler manifold of dimension $k(n-k)=kd$ admits a transitive $\GL(n,\C)$ action.  The $T$ equivalence is a quotient by  a $T_{n-1} = (\C^\ast)^{n-1}$ subgroup of $\GL(n,\C)$, so the remaining quotient is $\Gr(k,n) / T_{n-1}$.  As we will illustrate and discuss below, the $T_{n-1}$ action has non-trivial stabilizers in $\Gr(k,n)$, and the quotient topology fails to separate orbits.  This quotient therefore defines a moduli stack for the toric deformations of $T_X$.

\subsection{Redefinitions in the quantum theory: the action on the K\"ahler parameters}
While the $q_a$ are classically invariant under the redefinitions, this is not true in the quantum theory because the left- and right-moving fermions transform differently under the rescalings of the fields.  While the transformations of the gauge-neutral $\Sigma$ multiplets do not affect the $q_a$, the change in the fermi measure induced by the $T_n$ action corresponds to rescaling 
\begin{align}
q_a \mapsto q_a \times \prod_{\rho} t_\rho^{Q^a_\rho} ~.
\end{align}
Note that the rescalings act equivariantly on~(\ref{eq:QSCR}); moreover, the complexified $\GUL$ action is equivalent to a change in the renormalization scale $\mu$ in~(\ref{eq:1loop}). 

For $q_a \in \C^\ast$ the field redefinition quotient of the GLSM parameter space yields what we will call the ``quantum moduli stack.''  The $\GL(k,\C)$ quotient yields the reduced A/2 model parameter space
\begin{align}
\label{eq:Pi}
\Pi_{k,n} = \cP/\GL(k,\C) =  \Gr(k,n) \times T_k~,
\end{align}
and the quantum moduli stack is then $\Pi_{k,n} /T_n$.   As in the previous section, this quotient has non-separated points associated to stabilizers for the $T_n$ action.  The moduli stack for the classical theory can be obtained by taking the classical $q_a \to 0$ limit.  This classical moduli stack will have points with larger stabilizers, so that in some sense the quotient is better behaved in the quantum case.  

More generally, we could think of the $q_a$ as local coordinates on the toric variety whose toric fan is the secondary fan of $X$ and study the full quotient.  Such a description would then automatically include all of the limiting points that describe the different phases of the GLSM.  We will not include these limiting singular loci in this work, because without them the resulting moduli spaces are much simpler, and we do not lose very much by leaving out these infinite distance points.  Since we can obtain the QSCR and A/2 correlators as rational functions of the GLSM parameters, it is a simple matter to evaluate their limits.

The stabilizer of the $T_n$ action on $(q_1,\ldots, q_k) $ has a simple combinatorial description. Since $X$ is smooth and projective, there is a short exact sequence
\begin{align}
\xymatrix{0 \ar[r] & M \ar[r]^-{\alpha} & \Z^{\Sigma_X(1)} \ar[r]^-{\beta} & \Pic(X) \ar[r] & 0~,}
\end{align}
where the first map is represented by a $n\times d$ matrix $A$ that encodes the coordinates of the primitive vectors $u_\rho \in N_{\R}$ that generate the one-dimensional cones $\rho \in \Sigma_X(1)$, and the second map is represented by our familiar charge matrix $Q$.  From this it follows that a point $(q_1,\ldots,q_k) \in (\C^\ast)^k$ is stabilized by the $T_d \subset T_n$ subgroup
\begin{align}
t_\rho = \prod_{\alpha=1}^d \tau_\alpha^{A^\rho_\alpha}~,
\end{align}
with $(\tau_1,\ldots,\tau_d) \in T_d$.  We will use this fact below.

In a sense we are now finished with a description of the toric moduli of the A/2 GLSM for the NEF Fano toric variety $X$: it is given by the quantum moduli stack.  This is not a very satisfactory description.  While it does include every smooth A/2 theory, the non-separated points complicate the geometric interpretation of the quotient.  A better description is obtained as follows.  

In order to obtain a separated variety from the quotient, and therefore a moduli space of A/2 theories, we must remove some of the $T_n$ orbits.  Since $T_n$ is a reductive group, we can use GIT to produce a separated quotient by choosing a stability condition and removing an exceptional set of ``unstable'' points~\cite{Mumford:GIT}.  This is of course very familiar from the toric constructions we have been discussing.  Our goal is to find a quotient compact variety $\cMb(X)$, which contains the set of smooth A/2 theories as an open subset $\cM(X) \subset \cMb(X)$, and GIT will produce such a variety for any stability condition where the exceptional set $F \subset \cP$ is contained in the discriminant locus $\cA_{\tsing} \subset \cP$.  We have not proven that this is the case for every NEF Fano $X$, but it is borne out in examples, as we will see in the next section.  Indeed, it seems that typically for any $X$ there are many acceptable stability conditions.  We leave the analysis of such acceptable conditions for general $X$ to future work and now turn to a study of some illuminating examples.

\section{Examples} \label{s:examples}
\subsection{The simplest example: $X=\P^1$}
The simplest example is provided by $X=\P^1$.  In this case $Q=(1~1)$ and $\bE = (E_1 z_1 ~ E_2 z_2)$.  Keeping the $E$ matrix full rank requires $E_1$ and $E_2$ not to vanish simultaneously; however, since~(\ref{eq:QSCR}) reduces to  
\begin{align}
E_1 E_2 \sigma^2 = q~,
\end{align}
we see that for $q\neq 0$ (0,2) SUSY requires $E_1,E_2 \in \C^\ast$.  In appendix~\ref{app:elgen} we show that this is consistent with the elliptic genus.  We can then use the rescalings to set $E_1 = E_2 =1$---the canonical (2,2) form.   Moreover, the complexified $\GUL$ rescaling can be used to set $q$ to any non-zero value, reflecting the running nature of this coupling, and the set of parameters is left invariant by a $\Z_2 \subset \GUL$, the unbroken global symmetry of the $\P^1$ model.  Geometrically all of this is not surprising: the tangent bundle of $\P^1$ is indeed rigid.

\subsection{$X=\P^1\times \P^1$}
Let $[z_1:z_2]$ and $[z_3:z_4]$ be the projective coordinates on $X$.  Then the charge matrix $Q$ and the $E$ matrix defining torus-equivariant deformations are 
\begin{align}
Q &= \begin{pmatrix} 1 & 1& 0 & 0 \\ 0 & 0 & 1 &1 \end{pmatrix}~, &
\bE & = 
\begin{pmatrix} 
E^1_{1} z_1 & E^1_{2} z_2 & E^1_{3} z_3 & E^1_{4} z_4 \\
E^2_{1} z_1 & E^2_{2} z_2 & E^2_{3} z_3 & E^2_{4} z_4
\end{pmatrix}~.
\end{align}
The QSCR are then
\begin{align}
\label{eq:QSCRP1P1}
(\sigma \cdot E_1) (\sigma \cdot E_2) & = q_1~,&
(\sigma \cdot E_3) (\sigma \cdot E_4) & = q_2~,
\end{align}
where we use a condensed notation $\sigma \cdot E_\rho = \sum_a \sigma_a E^a_{\rho}$.  As long as $q_a \in \C^\ast$, we immediately see that SUSY will be broken when a column of the $E$ matrix is identically zero.  On the other hand, for a generic $E$ matrix, the solution set to the QSCR will consist of $4$ points in $\C^2$.  For instance, when $\cE = T_X$, the QSCR are $\sigma_a^2 = q_a$, with roots $\{(\pm \sqrt{q_1},\pm \sqrt{q_2}),(\pm \sqrt{q_1},\mp\sqrt{q_2})\}$.  

We now characterize the points in $\cP$ according to the general discussion of the spectral cover for the QSCR in section~\ref{ss:speccover}.  Our first goal is to characterize the discriminant locus for $q_a\in \C^\ast$.  As we mentioned above, there is an explicit combinatorial presentation based on Bernstein's theorem---see appendix~\ref{app:bernstein} for more details, but in all of our examples we will be able to describe the discriminant without introducing this additional machinery.  Here we proceed as follows.  First, we compactify the $\C^2$ where the $\sigma_a$ take value to $\P^2$ by introducing homogeneous coordinates $[s_0:s_1:s_2]$ and setting $\sigma_{1,2} = s_{1,2} / s_0$.  The QSCR then lead to a subvariety $\Vb \subset \P^2$ defined by the vanishing locus of
\begin{align}
(s \cdot E_1) (s \cdot E_2) & = q_1s_0^2~,&
(s \cdot E_3) (s \cdot E_4) & = q_2s_0^2~.
\end{align}
When $\Vb$ is zero-dimensional, Bezout's theorem indicates that it consists of $4$ points counted with multiplicity.  There is a $1:1$ correspondence between finite solutions of the QSCR in~(\ref{eq:QSCRP1P1}) and points in $\Vb$ that do not lie in the compactification divisor $C = \{s_0 = 0\} \subset \P^2$.  In other words, the discriminant locus consists of the $(E,q)\in \cP$ for which $\Vb \cap C \neq \emptyset$, which holds if and only if
\begin{align}
D = \zeta_{13}\zeta_{14} \zeta_{23} \zeta_{24} = 0~,
\end{align}
where $\zeta_{ij}$ is the determinant of the minor of $E$ obtained by taking the $i$-th and $j$-th column.

This result is in accord with intuition from considering a gauge instanton expansion for the A/2 GLSM.  As in the (2,2) case, whenever $X$ is Fano, only a finite number of gauge instantons can contribute to any fixed correlator, leading to an expression polynomial in the $q_a$ with $E$-dependent coefficients.   Therefore, the only way for a correlator to diverge is if one of these coefficients diverges.   We therefore expect that the discriminant will have a $q_a$---independent component that can be computed in the classical $q_a\to 0$ limit, and, barring some jumping phenomenon for non-zero $q$, we expect the discriminant to be $q_a$--independent.  Indeed, $D = 0$ is precisely the locus where the rank of $E$ decreases for some point on $X$, and  $\cE$ fails to be locally free.

We see that for any choice of parameters where the A/2 theory is smooth, we obtain unbroken (0,2) SUSY in the physical theory.  On the other hand, at special singular points where $\Vb \subset C$, SUSY will be broken.  As the following two examples indicate, this can take place at either special or generic values of the $q_a$.
\begin{enumerate}
\item Set 
\begin{align*}
 E = \begin{pmatrix} 1 & 0 & 1 & 0 \\ 0 & 1 & 0 & 1 \end{pmatrix}~.
\end{align*}
This leads to QSCR $\sigma_1\sigma_2 = q_1$ and $\sigma_1 \sigma_2 = q_2$.  If $q_1 = q_2$, there is a continuum of SUSY vacua, and $\Vb$ degenerates to a conic in $\P^2$; if $q_1 \neq q_2$, then there are no SUSY vacua, and $\Vb$ consists of two points $[0:1:0]$ and $[0:0:1]$ in $C$.
\item Set
\begin{align*}
 E = \begin{pmatrix} 1 & 1 & 1 & 0 \\ 0 & 1 & 0 & 1 \end{pmatrix}~.
\end{align*}
Here the QSCR are $\sigma_1^2 = q_1 -q_2$ and $\sigma_1\sigma_2 = q_2$.  When $q_1\neq q_2$ there are SUSY vacua, but when $q_1 = q_2$ there are none:  $\Vb$ consists of $[0:1:0] \subset C$.
\end{enumerate}

\subsubsection*{The $\GL(2,\C)$ quotient and SUSY vacua}
We now make the first step in describing the moduli space for this A/2 theory.  As explained above, we restrict attention to full-rank $E$ and $q_a \in \C^\ast$.  In that case, the $\GL(2,\C)$ quotient by the $\Sigma$ field redefinitions, labeled by $R$ in (\ref{eq:Eaction}) reduces the $E$ parameters to points in $\Gr(2,4)$, a compact variety that we will represent by the vanishing of
\begin{align}
\label{eq:Pluecker}
P = \zeta_{12} \zeta_{34} + \zeta_{14} \zeta_{23} -\zeta_{13} \zeta_{24} 
\end{align}
in $\P^5$ with projective coordinates $[\zeta_{12},\zeta_{13},\ldots,\zeta_{34}]$.  Together with the $q_a$ this constitutes the reduced parameter space $\Pi_{2,4}$ of~(\ref{eq:Pi}).

The theory is singular on the reducible subvariety $\cA_{\text{sing}} = \{D = 0\} \subset \Pi_{2,4}$. In the Pl\"ucker coordinates its components are obtained by intersecting $P=0$ with one of the coordinate hyperplanes $\zeta_{12} = 0$, $\zeta_{13} = 0$, $\zeta_{14} = 0$, $\zeta_{24} = 0$.   

According to the general discussion in section~\ref{ss:speccover}, $\cA_{\text{sing}}$ will in general contain a subset $\cA_{\nSUSY}$, where there are no SUSY vacua.  $\cA_{\nSUSY}$ is a bit awkward to describe, since it is not in general closed.  However, we can describe its union with $\cA_{+}$, the locus where a continuum of $\sigma$ vacua opens up.  We leave the details to appendix~\ref{app:nonSUSYP1P1} and quote the answer here.  We have $\cA_+ = \cA'_+ \cap \{P=0\}$, where
\begin{align}
\cA'_+ = \{\zeta_{23} = 0,~\zeta_{14} = 0,~\Delta_+ = 0\} \cup \{\zeta_{13}=0,~\zeta_{24}=0,~\Delta_-=0\}~,
\end{align}
and $\Delta_\pm = q_2 \zeta_{12} \pm q_1 \zeta_{34}$.  Next, we have
\begin{align}
\cA_+ \cup \cA_{\nSUSY} = \{\{P=0\} \cap \cB \} \cup  \left\{ \{\P^2_1 \cup \P^2_2 \cup \P^2_3 \cup \P^2_4 \}\times (\C^\ast)^2\right\} ~,
\end{align}
where
\begin{align}
\label{eq:P2idef}
\P^2_i = \{\zeta_{ij} = \zeta_{ik} =\zeta_{il} = 0\}~, 
\end{align}
with $j,k,l$ distinct and not equal to $i$,
and 
\begin{align}
\cB = \{ \zeta_{14}\zeta_{23} = 0,~ \Delta_+ = 0\} \cup \{\zeta_{13} \zeta_{24} = 0~, \Delta_- = 0\}~.
\end{align}

\subsubsection*{The $T_4$ quotient and GIT}
Finally, we examine the action of the remaining field redefinitions on $\Pi_{2,4}$.  After a change of basis, the $T_4$ action on the $\zeta_{ij}$ and $q_a$ is described by the charge matrix
\begin{align}
\xymatrix@R=0mm@C=4mm{
~	&	\zeta_{12}	&\zeta_{13}	&\zeta_{14}		&\zeta_{23}	&\zeta_{24}	&\zeta_{34}	&q_1		&q_2\\
t'_1	&1			&1			&1			&0			&0			&0			&1		&0	\\
t'_2	&0			&1			&0			&1			&0			&1			&0		&1	\\
t'_3	&0			&1			&1			&-1			&-1			&0			&0		&0 	\\
t'_4	&0			&1			&-1			&1			&-1			&0			&0		&0	\\
}
\end{align}
In this basis it is clear that the $T_4$ action factors through a $T_2$ action on the $(\zeta_{13},\zeta_{14},\zeta_{23},\zeta_{24})$, and we can use $t'_{1,2}$ to set $q_{1,2} = 1$.  Moreover, there is a $\Z_2 \subset T_2$, generated by $(t'_3,t'_4) = (-1,-1)$ that does not act---this corresponds to the unbroken $\Z_2 \subset \GUL$ symmetry of this A/2 model.
A further change of basis yields the effective $T_2/\Z_2$ action
\begin{align}
\xymatrix@R=0mm@C=4mm{
~	&\zeta_{12}	&\zeta_{13}	&\zeta_{14}		&\zeta_{23}	&\zeta_{24}	&\zeta_{34}	&q_1		&q_2 & \\
\tau_1	&0			&1			&0			&0			&-1			&0			&0		&0 	&\\
\tau_2	&0			&0			&1			&-1			&0			&0			&0		&0	&.\\
}
\end{align}
Since we can eliminate $t_1$ and $t_2$ by fixing $q_1$ and $q_2$, let us focus on the $T_2/\Z_2$ action $(\tau_1,\tau_2)$.
A generic orbit is labeled by two invariants:  $u = \zeta_{13}\zeta_{24}$ and $v = \zeta_{14} \zeta_{23}$.  If we decide to keep all orbits of the $T_2/\Z_2$ action, then the result will be a non-separated space.  A hallmark of such pathology is an ambiguity in limits.  For instance, $u=0$ and $v\neq 0$ corresponds to the orbits with representatives
\begin{align}
(\zeta_{13},\zeta_{14},\zeta_{23},\zeta_{24}) \in \{ (1,v,1,0),~(0,v,1,1),~(0,v,1,0)\}~,
\end{align}
and the third orbit is in the closure of the first two.

We will say that an orbit is smooth(singular) if and only if it corresponds to a smooth(singular) A/2 theory.  The non-separated orbits are all singular, and since we are interested in giving a compactification of the moduli space of smooth A/2 theories, we can impose stability conditions to eliminate some of the singular orbits and obtain a separated compact quotient. 

Letting $\tau_0$ denote the $\C^\ast$ action on the projective coordinates of $\P^5$, we wish to impose stability conditions for the action 
\begin{align}
\xymatrix@R=0mm@C=4mm{
~	&\zeta_{12}	&\zeta_{13}	&\zeta_{14}		&\zeta_{23}	&\zeta_{24}	&\zeta_{34}	& \\
\tau_0	&1			&1			&1				&1			&1			&1			    &\\		
\tau_1	&0			&1			&0				&0			&-1			&0			 	&\\
\tau_2	&0			&0			&1				&-1			&0			&0				&.\\
}
\end{align}
This is just a three-dimensional toric quotient, and the possible stability conditions are in one-to-one correspondence with relative interiors of the cones in the secondary fan shown in figure~\ref{fig:stabforP1P1}.
\begin{figure}[t]
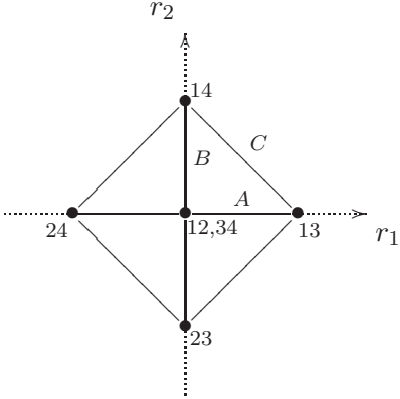

\[
\xy 
(0,0)*\xybox{<3.0mm,0.0mm>:
(0,0)*{\bullet}="1234"; (1.2,-0.7)*{{\scriptstyle{12,34}}};
(5,0)*{\bullet}="13"; (5.5,-0.7)*{\scriptstyle{13}};
(-5,0)*{\bullet}="24"; (-5.7,-0.7)*{\scriptstyle{24}};
(0,5)*{\bullet}="14"; (0.7,5.5)*{\scriptstyle{14}};
(0,-5)*{\bullet}="23";(0.7,-5.5)*{\scriptstyle{23}};
{\ar@{.>}(-8,0);(8,0)};
(9,-1)*{r_1};
{\ar@{.>}(0,-8);(0,8)};
(-1,9)*{r_2};
{\ar@{-}"1234";"13"^{\scriptstyle{A}}};{\ar@{-}"1234";"14"_{\scriptstyle{B}}};{\ar@{-}"1234";"24"};{\ar@{-}"1234";"23"};
{\ar@{-}"13";"14"_{\scriptstyle{C}}};{\ar@{-}"14";"24"};{\ar@{-}"24";"23"};{\ar@{-}"23";"13"};
};
\endxy
\]
\caption{Stability conditions for the $T_4$ quotient of the $\P^1\times\P^1$ model: the secondary fan. The moment maps  for the $(\tau_0,\tau_1,\tau_2)$ are $r_0,r_1,r_2$, and shown is the intersection of the fan with the $r_0 = 1$ plane.  We also label the two-dimensional cones $A$, $B$, $C$ as shown.}
\label{fig:stabforP1P1}
\end{figure}
There are many possibilities corresponding to the large collection of cones in the secondary fan, but clearly many give qualitatively similar features.   We focus on a representative set to discuss the possibilities for the exceptional set $F$.
\begin{enumerate}[(i)]
\item $\relint(ABC)$.  The moment maps satisfy $r_1 >0$, $r_2>0$, and $r_0 > r_1+r_2$. 
 $$ F= \{\zeta_{13} =0\} \cup \{\zeta_{14} = 0\} \cup\{\zeta_{12}=0,\zeta_{23}=0,\zeta_{24}=0,\zeta_{34}=0\}.$$
\item $\relint(A)$.  Now $r_2=0$, $r_1>0$, and $r_0 >r_1$, and
$$ F = \{\zeta_{13}=0\} \cup \{\zeta_{14}=0,\zeta_{23} \neq 0\} \cup \{\zeta_{14}\neq0,\zeta_{23} = 0\}\cup\{\zeta_{12}=0,\zeta_{23}=0,\zeta_{24}=0,\zeta_{34}=0\}.$$
\item $\relint(12)$.  This is a more interesting possibility, given by $r_{1,2} = 0$ and $r_0 > 0$. 
\begin{align*} F &= \{\zeta_{14}=0,\zeta_{23} \neq 0\} \cup \{\zeta_{14}\neq0,\zeta_{23} = 0\} 
\cup \{\zeta_{13}=0,\zeta_{24} \neq 0\} \cup \{\zeta_{13}\neq0,\zeta_{24} = 0\} \nonumber\\
&\qquad\cup\{\zeta_{12}=0,\zeta_{23}=0,\zeta_{24}=0,\zeta_{34}=0\}~.
\end{align*}
\item $\relint(C)$; $r_1>0$, $r_2 >0$, and $r_0 = r_1+r_2$, and
$$ F \supset \{\zeta_{13}=0\} \cup \{\zeta_{14} = 0\}\cup \{\zeta_{12}\neq 0\} \cup \{\zeta_{23} \neq 0\} \cup \{\zeta_{24} \neq 0\} \cup \{\zeta_{34} \neq 0\}~.$$
This is not a very interesting possibility, since $F$ contains all smooth orbits.  The same holds if we take $\relint(B)$, $\relint(13)$, or $\relint(14)$.
\end{enumerate}
In the first three cases the result is a compactification $\cMb(X)$ of the moduli space of smooth $A/2$ theories for $X = \P^1\times\P^1$.  Each compactification keeps some singular orbits and, in fact, some non-SUSY orbits!  The last assertion is easy to see since no exceptional set contains the union of the $\P^2_i$. In the first case the orbits that intersect $\cA_+$ are removed, while this is not the case for the remaining choices.

Let us analyze the first choice, $\cMb_{(i)}$, in detail.  Having removed the exceptional set, we may set $\zeta_{13} = \zeta_{14} =1$ to find
\begin{align}
\cMb_{(i)}(\P^1\times\P^1) = \{\zeta_{12}\zeta_{34} +u-v = 0\} \subset \P^3_{1122} [\zeta_{12},\zeta_{34}, u, v]~.
\end{align}
The discriminant is given by $\overline{\cA_\tsing} = \{uv = 0\} \subset \cMb_{(i)}$,  and the tangent bundle corresponds to the point $[0,0,1,1]$.\footnote{ The $\Z_2$ orbifold singularity reflects the fact that at the (2,2) locus this theory acquires an additional $\Z_2$ symmetry.  That is a special feature of product theories.}  Finally, the non-SUSY locus is just a collection of points:
\begin{align}
\overline{\cA_{\nSUSY}} = \{ [0,1,0,0],~ [q_1 , q_2, 0, \pm 1],~[q_1, -q_2, \pm 1, 0]\}~.
\end{align}

\subsection{$X = dP_1 = \F_1$}
For our next example, we consider the first del Pezzo, or, equivalently, the first Hirzebruch surface. Its fan is given in figure~\ref{fig:dP1}~, and the charge matrix is
\begin{align}
\label{eq:QdP1}
Q = \begin{pmatrix} 1 & 1 & 0 & 1 \\ 0 & 0 & 1 & 1 \end{pmatrix}~.
\end{align}
\begin{figure}[t]
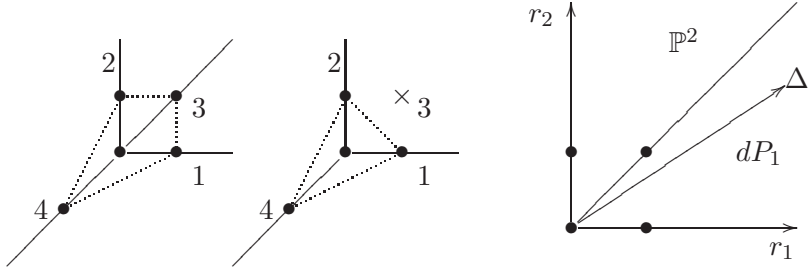

\[
\xy 
(-10,0)*\xybox{<1.5mm,0.0mm>:
(0,0)*{\bullet}="0";
(5,0)*{\bullet}="1"; (7,-2)*{1};
(5,5)*{\bullet}="3"; (7,4)*{3};
(0,5)*{\bullet}="2"; (-1,8)*{2};
(-5,-5)*{\bullet}="4";(-7,-5)*{4};
{\ar@{-}"0";(10,0)};
{\ar@{-}"0";(10,10)};
{\ar@{-}"0";(0,10)};
{\ar@{-}"0";(-10,-10)};
{\ar@{.}"1";"3"};{\ar@{.}"3";"2"};{\ar@{.}"2";"4"};{\ar@{.}"4";"1"};
};
(20,0)*\xybox{<1.5mm,0.0mm>:
(0,0)*{\bullet}="0";
(5,0)*{\bullet}="1"; (7,-2)*{1};
(5,5)*{\times}="3"; (7,4)*{3};
(0,5)*{\bullet}="2"; (-1,8)*{2};
(-5,-5)*{\bullet}="4";(-7,-5)*{4};
{\ar@{-}"0";(10,0)};
{\ar@{-}"0";(0,10)};
{\ar@{-}"0";(-10,-10)};
{\ar@{.}"1";"2"};{\ar@{.}"2";"4"};{\ar@{.}"4";"1"};
};
(50,-10)*\xybox{<1.0mm,0.0mm>:
(0,0)*{\bullet}="0";
(10,0)*{\bullet}="1";
(10,10)*{\bullet};
(30,20)*{\Delta}="3";
(25,10)*{dP_1};
(15,25)*{\P^2};
(0,10)*{\bullet}="2";
{\ar@{->}"0";"3"};
{\ar@{->}"0";(30,0)};
{\ar@{-}"0";(30,30)};
{\ar@{->}"0";(0,30)};
(28,-3)*{r_1};
(-4,28)*{r_2};
};
\endxy
\]
\caption{$dP_1$ fan, its ``alternate phase'' ($\P^2$), and the secondary fan with $\Delta=-K_X$.}
\label{fig:dP1}
\end{figure}
The moment map equations are then
\begin{align}
&(i) ~~|z_1|^2+|z_2|^2+|z_4|^2 = r_1~, \quad
(ii) ~~|z_3|^2 + |z_4|^2 = r_2~, \nonumber\\
&\implies (iii)~~ |z_1|^2 +|z_2|^2 -|z_3|^2 = r_1-r_2~.
\end{align}
The exceptional collections can be seen from the moment map equations.  We find
\begin{align}
(i) \implies F \supset \{z_1=z_2=z_4=0\} \qquad
(ii) \implies F \supset \{z_3=z_4=0\}~,
\end{align}
and the remaining component depends on the sign of $r_1-r_2$.  When $r_1 >r_2$ the additional
component of $F$ is $\{z_{1}=z_{2} = 0\}$, and
\begin{align}
F  =  \{z_1=z_2=0\} \cup \{z_3=z_4=0\} \implies X = dP_1~.
\end{align}
When $r_1 < r_2$ the additional component is $\{z_3=0\}$, so that
\begin{align}
F =  \{z_1=z_2=z_4=0\}\cup \{z_3=0\} \implies X = \P^2~.
\end{align}
The latter follows since when $z_3\neq 0$ we can set $z_3 = 1$ by the second $\C^\ast$ action.

The QSCR are
\begin{align}
(\sigma\cdot E_1) (\sigma\cdot E_2) (\sigma\cdot E_4) & = q_1~,&
(\sigma \cdot E_3) (\sigma \cdot E_4) & = q_2~.
\end{align}
One might naively think that the solution set would consist of $6$ points, but this is not the case.  Indeed, we can recast the system as
\begin{align}
\left[ q_2 (\sigma\cdot E_1) (\sigma\cdot E_2) - q_1 (\sigma \cdot E_3)\right] (\sigma\cdot E_4) & = 0~,
&
(\sigma \cdot E_3) (\sigma \cdot E_4) & = q_2~.
\end{align}
Since $\sigma \cdot E_4$ cannot vanish for any finite solution, we see that for generic values of the $(E,q)$ parameters, all solutions to the QSCR arise as solutions to the reduced system
\begin{align}
q_2 (\sigma\cdot E_1) (\sigma\cdot E_2) - q_1 (\sigma \cdot E_3) & = 0~,
&
(\sigma \cdot E_3) (\sigma \cdot E_4) & = q_2~.
\end{align}
Generically, there are $4$ solutions, consistent with $\chi(\F_1 ) = 4$.  For instance, for $\cE = T_X$ the system reduces to
\begin{align}
\sigma_2 &= q_1^{-1} q_2 \sigma_1~,& \sigma_1^4 + q_1 q_2^{-1} \sigma_1^3 &= q_1^2 q_2^{-1}~.
\end{align} 

The discriminant locus can be obtained by following the same compactification scheme that we used for the $\P^1\times\P^1$ example.  Introducing the projective coordinates $[s_0:s_1:s_2]$, we consider $\Vb \subset \P^2$ defined by
\begin{align}
q_2 (s\cdot E_1) (s\cdot E_2) - q_1 s_0(s \cdot E_3) & = 0~,
&
(s \cdot E_3) (s \cdot E_4) & = s_0^2q_2~.
\end{align}
We seek the parameter values for which $\Vb$ intersects the compactification divisor, which leads to the same form for the discriminant as for $X = \P^1\times\P^1$:
\begin{align}
D = \zeta_{13} \zeta_{14} \zeta_{23} \zeta_{24}~.
\end{align}
This is again in accord with intuition from the instanton expansion of the correlators: each correlator will be a polynomial in $q_2$ and $q_1 q_2^{-1}$, and the discriminant locus is expected to be $q$-independent.  The vanishing of $D$ corresponds to choices of $E$ for which $\cE$ is a sheaf and not a smooth vector bundle over $X$.

\subsubsection*{$\GL(2,\C)$ quotient and SUSY vacua}
Just as in the previous example, the reduced A/2 parameter space is $\Pi_{2,4}$, with the singular locus $\cA_{\text{sing}} = \{D = 0\} \subset \Gr(2,4)$.   In this case there are no continuous $\sigma$-branches, and the non-SUSY locus $\cA_{\nSUSY} \subset \cA_{\text{sing}}$ is closed and given by
\begin{align}
\cA_{\nSUSY} = \P^2_1 \cup \P^2_2 \cup \P^2_3\cup \P^2_4~,
\end{align}
where the $\P^2_i$ are defined as in~(\ref{eq:P2idef}).  This is shown in detail in appendix~\ref{app:nonSUSYdP1}.

\subsubsection*{The $T_4$ quotient and GIT}
A convenient basis for the $T_4$ action is
\begin{align}
\xymatrix@R=0mm@C=4mm{
~	&\zeta_{12}	&\zeta_{13}	&\zeta_{14}		&\zeta_{23}	&\zeta_{24}	&\zeta_{34}	&q_1		&q_2\\
t_1	&0			&1			&0			&1			&0			&1			&0		&1	\\
t_2	&0			&0			&1			&0			&1			&1			&1		&1	\\
t_3	&1			&2			&0			&1			&-1			&0			&0		&0	\\
t_4	&1			&1			&-1			&2			&0			&0			&0		&0	\\
}
\end{align}
Unlike the previous case, here the $\GUL$ symmetry is completely broken, so that $T_4$ acts effectively.  The choice of basis shows that it factors through a $T_2$ action given by the last two rows, while the first two rows can be used to fix $q_1$ and $q_2$ to some fiducial values.  Including the $\C^\ast$ action on the $\zeta_{ij}$, we therefore consider the quotient by
\begin{align}
\xymatrix@R=0mm@C=4mm{
~		&\zeta_{12}	&\zeta_{13}	&\zeta_{14}	&\zeta_{23}	&\zeta_{24}	&\zeta_{34}	\\
\tau_0	&1			&1			&1			&1			&1			&1			\\
\tau_1	&1			&2			&0			&1			&-1			&0				\\
\tau_2	&1			&1			&-1			&2			&0			&0				\\
}
\end{align}
The secondary fan for this action is given in figure~\ref{fig:stabfordP1}.  
\begin{figure}[t]
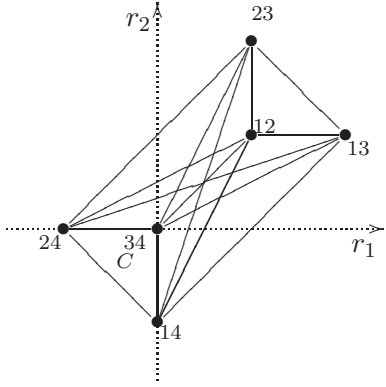

\[
\xy 
(0,0)*\xybox{<2.5mm,0.0mm>:
(0,0)*{\bullet}="34"; (-1.2,-0.7)*{{\scriptstyle{34}}};
(10,5)*{\bullet}="13"; (10.7,4.3)*{\scriptstyle{13}};
(-5,0)*{\bullet}="24"; (-5.7,-0.7)*{\scriptstyle{24}};
(5,10)*{\bullet}="23"; (5.6,11.5)*{\scriptstyle{23}};
(0,-5)*{\bullet}="14";(0.7,-5.5)*{\scriptstyle{14}};
(5,5)*{\bullet}="12";(5.7,5.5)*{\scriptstyle{12}};
(-1.7,-1.7)*{\scriptstyle{C}};
{\ar@{.>}(-8,0);(12,0)};
(11,-1)*{r_1};
{\ar@{.>}(0,-8);(0,12)};
(-1,11)*{r_2};
{\ar@{-}"34";"13"};{\ar@{-}"34";"14"};{\ar@{-}"34";"24"};{\ar@{-}"34";"23"};
{\ar@{-}"13";"14"};{\ar@{-}"14";"24"};{\ar@{-}"24";"23"};{\ar@{-}"23";"13"};
{\ar@{-}"12";"14"};{\ar@{-}"12";"13"};{\ar@{-}"12";"14"};{\ar@{-}"12";"23"};{\ar@{-}"12";"24"};{\ar@{-}"12";"34"};
{\ar@{-}"13";"24"};{\ar@{-}"14";"23"};
};
\endxy
\]
\caption{Stability conditions for the $T_4$ quotient of the $dP_1$ model: the secondary fan. The moment maps  for the $(\tau_0,\tau_1,\tau_2)$ are $r_0,r_1,r_2$, and shown is the intersection of the fan with the $r_0 = 1$ plane.}
\label{fig:stabfordP1}
\end{figure}
There are many choices of stability conditions, but, as in the previous example, all of them include some non-SUSY orbits.  This is easy to see by taking the image of the $\P^2_i$ under the moment map: this covers the entire secondary fan.

As a concrete example of a compactification of the A/2 moduli space, consider the full-dimensional cone $C$ in the figure.  This corresponds to the exceptional set
\begin{align}
F_C = \{\zeta_{14} = 0\} \cup \{\zeta_{24} = 0\} \cup \{\zeta_{12}=0,\zeta_{13}=0,\zeta_{23}=0,\zeta_{34}=0\}~.
\end{align}
The compactification can be described in terms of four $\tau_1,\tau_2$ invariants:   $\zeta_{34}$, as well as
\begin{align}
u &  = \zeta_{12} \zeta_{14} \zeta_{24}~,&
v & = \zeta_{13} \zeta_{14} \zeta_{24}^2~,&
w & = \zeta_{23} \zeta_{14}^2 \zeta_{24}~.
\end{align}
These satisfy $u\zeta_{34}-v+w = \zeta_{14}\zeta_{24} P$.  From the exceptional set $F_C$ we obtain
\begin{align}
\cMb(dP_1) = \{u\zeta_{34}-v+w = 0\} \subset \P^3_{3441} [u,v,w,\zeta_{34}]~.
\end{align}
The discriminant is $\Db = \{vw=0\}$; the non-SUSY locus is just a point:  $\overline{\cA_{\nSUSY} }= [1,0,0,0]$; the tangent bundle is the point $[0,1,1,-1]$.

\subsection{$X=\F_2$}
For our final example we consider a NEF Fano surface $X = \F_2$, to illustrate some important differences from the Fano case.  The charge matrix
\begin{align}
Q = \begin{pmatrix} 1 & 1 & 0 & 0 \\ -2 & 0 & 1 & 1 
\end{pmatrix}~
\end{align}
leads to the fans and phases in figure~\ref{fig:F2}.
\begin{figure}
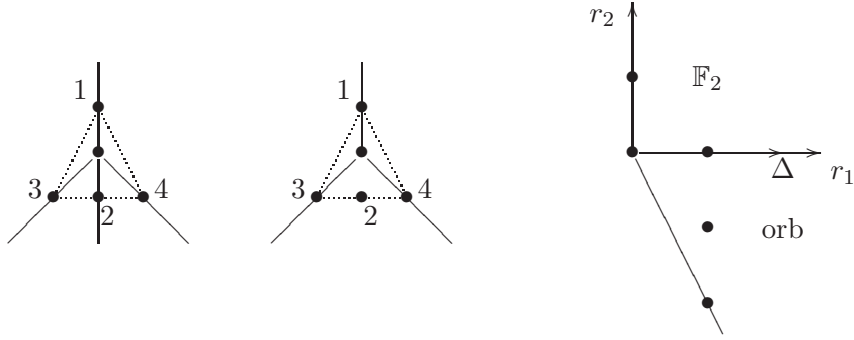

\[
\xy 
(-10,0)*\xybox{<1.2mm,0.0mm>:
(0,0)*{\bullet}="0";
(0,5)*{\bullet}="1"; (-2,7)*{1};
(-5,-5)*{\bullet}="3"; (-7,-4)*{3};
(0,-5)*{\bullet}="2"; (1,-7)*{2};
(5,-5)*{\bullet}="4";(7,-4)*{4};
{\ar@{-}"0";(0,10)};
{\ar@{-}"0";(-10,-10)};
{\ar@{-}"0";(0,-10)};
{\ar@{-}"0";(10,-10)};
{\ar@{.}"1";"3"};{\ar@{.}"3";"2"};{\ar@{.}"2";"4"};{\ar@{.}"4";"1"};
};
(25,0)*\xybox{<1.2mm,0.0mm>:
(0,0)*{\bullet}="0";
(0,5)*{\bullet}="1"; (-2,7)*{1};
(-5,-5)*{\bullet}="3"; (-7,-4)*{3};
(0,-5)*{\bullet}="2"; (1,-7)*{2};
(5,-5)*{\bullet}="4";(7,-4)*{4};
{\ar@{-}"0";(0,10)};
{\ar@{-}"0";(-10,-10)};
{\ar@{-}"0";(10,-10)};
{\ar@{.}"1";"3"};{\ar@{.}"3";"2"};{\ar@{.}"2";"4"};{\ar@{.}"4";"1"};
(30,0)*\xybox{<1.0mm,0.0mm>:
(0,0)*{\bullet}="0";
(10,0)*{\bullet}="1";
(10,10)*{\F_2};
(20,-10)*{ \text{orb}};
(0,10)*{\bullet}="2";
(10,-10)*{\bullet};
(10,-20)*{\bullet};
{\ar@{->}"0";(25,0)};
{\ar@{-}"0";(12,-24)};
{\ar@{->}"0";(0,20)};
{\ar@{->}"0";(20,0)};
(20,-2)*{\Delta};
(28,-3)*{r_1};
(-4,18)*{r_2};
};
}
\endxy
\]
\caption{$\F_2$ fan, its alternate orbifold phase, and the secondary fan with $\Delta = -K_X$.}
\label{fig:F2}
\end{figure}
The QSCR are
\begin{align}
(\sigma \cdot E_1) (\sigma \cdot E_2) & =q_1~,&
(\sigma \cdot E_3) (\sigma \cdot E_4) & = q_2 (\sigma \cdot E_1)^2~.
\end{align}
In the second relation we brought the $(\sigma \cdot E_1)^2$ factor to the right-hand-side using the same logic as we did in reducing the system in the previous example:  for generic values of the parameters these polynomial equations describe the QSCR solution set. 

Our next task is to obtain the discriminant locus. By now the methodology is familiar:  we compactify the $\sigma$s to $\P^2$, and consider the intersection of the subvariety $\Vb$ defined by
\begin{align}
(s \cdot E_1) (s\cdot E_2) & =q_1 s_0^2~,&
(s \cdot E_3) (s \cdot E_4) & = q_2 (s \cdot E_1)^2~.
\end{align}
with the compactification divisor $C = \{s_0 = 0\}$.  The result is
\begin{align}
D = \zeta_{13}\zeta_{14} (\zeta_{23}\zeta_{24}  - q_2 \zeta_{12}^2)~.
\end{align}
The new feature is that now the discriminant depends on $q_2$.  Indeed, in this case correlators receive contributions from an infinite sum of instantons, and correlators develop $q$-dependent singularities due to diverging instanton sums~\cite{Morrison:1994fr}.  

\subsubsection*{GIT and SUSY vacua}
A convenient basis for the $T_4$ action on $\Pi_{2,4}$ is
\begin{align}
\xymatrix@R=0mm@C=4mm{
~	&\zeta_{12}	&\zeta_{13}	&\zeta_{14}	&\zeta_{23}	&\zeta_{24}	&\zeta_{34}	&q_1		&q_2\\
t_1	&0			&2			&2			&0			&2			&2			&0		&0	\\
t_2	&0			&1			&-1			&1			&-1			&0			&0		&0	\\
t_3	&1			&1			&1			&2			&2			&2			&1		&2	\\
t_4	&0			&0			&1			&0			&1			&1			&0		&1	\\
}
\end{align}
This A/2 theory has a $\Z_2$ global symmetry, and, as a consequence, there is an ineffective $\Z_2$ generated by $t_1 = -1$ and $t_{2,3,4} = 1$.  Focusing, as before, on the $T_2/\Z_2$ action, combined with the projective action on the $\zeta_{ij}$, we consider the quotient on $\Gr(2,4)$ generated by
\begin{align}
\xymatrix@R=0mm@C=4mm{
~		&\zeta_{12}	&\zeta_{13}	&\zeta_{14}	&\zeta_{23}	&\zeta_{24}	&\zeta_{34}	\\
\tau_0	&1			&1			&1			&1			&1			&1			\\
\tau_1	&0			&1			&1			&0			&0			&1				\\
\tau_2	&0			&1			&-1			&1			&-1			&0				\\
}
\end{align}
The stability conditions are encoded in the secondary fan in figure~\ref{fig:stabforF2}.
\begin{figure}[t]
\[
\xy 
(0,0)*\xybox{<3.0mm,0.0mm>:
(5,0)*{\bullet}="34"; (5.7,0.4)*{{\scriptstyle{34}}};
(5,5)*{\bullet}="13"; (5.7,5.5)*{\scriptstyle{13}};
(0,-5)*{\bullet}="24"; (-0.7,-5.1)*{\scriptstyle{24}};
(0,5)*{\bullet}="23"; (-0.7,5.1)*{\scriptstyle{23}};
(5,-5)*{\bullet}="14";(5.7,-5.5)*{\scriptstyle{14}};
(0,0)*{\bullet}="12";(-0.7,0.6)*{\scriptstyle{12}};
(4.4,2.3)*{\scriptstyle{C}};
{\ar@{.>}(-8,0);(12,0)};
(11,-1)*{r_1};
{\ar@{.>}(0,-7);(0,10)};
(-1,9)*{r_2};
{\ar@{-}"34";"13"};{\ar@{-}"34";"14"};{\ar@{-}"34";"24"};{\ar@{-}"34";"23"};
{\ar@{-}"13";"14"};{\ar@{-}"14";"24"};{\ar@{-}"24";"23"};{\ar@{-}"23";"13"};
{\ar@{-}"12";"14"};{\ar@{-}"12";"13"};{\ar@{-}"12";"14"};{\ar@{-}"12";"23"};{\ar@{-}"12";"24"};{\ar@{-}"12";"34"};
{\ar@{-}"13";"24"};{\ar@{-}"14";"23"};
};
\endxy
\]
\caption{Stability conditions for the $T_4$ quotient of the $\F_2$ model: the secondary fan. The moment maps  for the $(\tau_0,\tau_1,\tau_2)$ are $r_0,r_1,r_2$, and shown is the intersection of the fan with the $r_0 = 1$ plane.}
\label{fig:stabforF2}
\end{figure}
For instance, taking the full-dimensional cone $C$ leads to the exceptional set
\begin{align}
F_C = \{\zeta_{13} =0\}\cup \{\zeta_{14} =0,\zeta_{34} =0\} \cup \{\zeta_{12}=0,\zeta_{23}=0,\zeta_{24} = 0\} \subset \{D=0\}~.
\end{align}
So, we obtain a compactification of the A/2 moduli space
\begin{align}
\cMb_C(\F_2) = \{P = 0\} \subset Y~,
\end{align}
where $Y$ is the toric variety $Y = \{\C^6 \setminus F_C\} / (\C^\ast)^3$.  The discriminant locus is obtained by intersecting this further with
\begin{align}
\Db = \zeta_{14}(\zeta_{23}\zeta_{24}-q_2\zeta_{12}^2)~.
\end{align}
In appendix~\ref{app:nonSUSYF2} we show that the non-SUSY locus is given by
\begin{align}
\overline{\cA_{\nSUSY}} = \{\Delta_1 = 0,~\zeta_{14} = 0\} \cup \{\Delta_1 = 0,~\Delta_2 = 0\}~,
\end{align}
where
\begin{align}
\Delta_1 &= \zeta_{23}\zeta_{34} - 2 q_2 \zeta_{12} \zeta_{13}~,&
\Delta_2 &=\zeta_{13}\zeta_{24}+\zeta_{14}\zeta_{23}~.
\end{align}

\section{Discussion} \label{s:discuss}
In this work we described the parameter space of the A/2 half-twisted theory for a NEF Fano GLSM with toric deformations of the tangent sheaf.  The key element in the analysis was the identification of various interesting loci in the $\cP = \{(E,q)\}$ parameter space via the spectral cover associated to the quantum sheaf cohomology relations.  We identified singular theories, as well as those with spontaneously broken supersymmetry.  We showed that the space of all such theories modulo field redefinitions is described by the quantum moduli stack $\Pi_{k,n}/T_k$.  In examples, we were able to show that there exist GIT stability conditions that lead to separated moduli spaces of A/2 theories.  The most pressing step in completing our analysis of this class of models is to characterize such stability conditions for smooth NEF Fano $X$.   In addition, there are many interesting questions about the algebraic geometry of these moduli spaces:  what algebraic varieties arise as NEF Fano A/2 moduli spaces?  what sorts of singularities do they exhibit? do they, for instance, satisfy ``Murphy's law'' of moduli spaces~\cite{Payne:2008tb}?  what is the algebraic geometry of the non-SUSY locus?  

There are a number of natural generalizations of our analysis.  For instance, it should be possible to consider more general GLSMs for simplicial and projective toric varieties.  In this context the weakly coupled UV phase of the theory will in general be a NLSM with some toric target-space $X_{\text{UV}}$, and it should be possible to relate the spectral cover for the QSCR to properties of $X_{\text{UV}}$.  Typically $X_{\text{UV}}$ will not be smooth, and we expect to find appropriate generalizations of various geometric concepts.  For instance, the number of solutions to the QSCR should be related to some stringy generalization of the Euler characteristic. 

Another generalization would be to study the most general monadic deformations of $X$.  This will require understanding the quotient of the GLSM parameters by the full group of automorphisms of the toric variety.  While this has an explicit description~\cite{MR1299003}, the group is not in general reductive.  Of course there will be plenty of examples where the group will be reductive, and at least for those we would expect to find similarly nice moduli spaces.

Perhaps a more interesting direction would be to study massive theories with gauge-neutral $\Sigma$ multiplets but no (2,2) locus.  In other words, we could start with a more general toric monadic bundle and study its A/2 twist and, correspondingly, the IR dynamics of the ground ring.  To our knowledge such theories have not been explored in any detail.

While much remains to be learned about the massive theories, the most interesting step would be to go beyond these ``model'' problems and tackle the one that provided our original motivation:  the moduli space of (0,2) SCFTs based on GLSM constructions.  The ideas developed in this work, together with results like the (0,2) quantum restriction formula~\cite{McOrist:2008ji} should allow us to study monadic moduli spaces of at least some simple examples.  Understanding appropriate stability conditions will be a key challenge in this generalization.  For the massive A/2 theories considered in this work, the choice of exceptional set $F$, a choice of a GIT stability condition, is not fixed by any requirement beyond the desire to keep all smooth points.  We expect that in a (0,2) SCFT, corresponding to a Calabi-Yau geometry, a stability condition should be singled out by the (0,2) quantum generalization of Hermitian Yang-Mills and corresponding $0$-slope stability familiar from supergravity.  It should be very interesting to see whether the GLSM compactification of the (0,2) SCFT moduli space includes limit points with spontaneously broken SUSY.  

\appendix

\section{Remarks on non-linear $E$ deformations} \label{app:nonlin}

The arguments presented here were sketched out by IVM and M.R.~Plesser on a drive from Duke to Virginia Tech on a blustery October day in 2011.  IVM thanks MRP for the ride and for his contribution in making this argument.

Consider a toric GLSM as in the main body of the text with an arbitrary $\bE$ deformation.  To characterize these, we first consider the monomials in $S=\C[z_{\rho_1},\ldots,z_{\rho_n}]$ that have the same gauge charges as a $z_\rho$.  That is, we define\footnote{This is slightly different from a similar definition in~\cite{Kreuzer:2010ph}, where $S_\rho$ also included $z_\rho$.  The sets $S_\rho$ play an important role in characterizing the automorphisms of a toric variety~\cite{MR1299003,Cox:2000vi}.}
\begin{align}
S_{\rho} = \{ \mon \in S ~~|~~ \mon\neq z_{\rho'} \quad\text{and}\quad  \deg \mon =\deg z_\rho \}~.
\end{align}
This allows us to write the full E-couplings as
\begin{align}
\cDb \Gamma_\rho = \sum_{a=1}^{k} \Sigma_a \left[ \sum_{\rho'} E^a_{\rho\rho'} z_{\rho'} + \sum_{\mon \in S_\rho} F^a_{\rho \mon} \mon \right]~.
\end{align}
Of course $E^{a}_{\rho\rho'}$ is zero whenever $\deg z_\rho \neq \deg z_{\rho'}$.
As remarked in the text, we can instead work with an isomorphic GLSM where we relabel $\Gamma_\rho$ and $\Gammab_\rho$ (we will use $\Gammat$ to denote these relabeled multiplets) and work with chiral $\Gammat$ and Yukawa couplings encoded in a (0,2) superpotential $\cL \supset \int d\theta \cW + \text{h.c.}$ and
\begin{align}
\cW = 
\sum_{a=1}^{k}\left[  \frac{\log q_a}{8\pi i}  \Upsilon_a + 
\sum_{\rho} \Gammat^\rho \left( \sum_{\rho'} E^a_{\rho\rho'} Z_{\rho'} + \sum_{\mon \in S_\rho}  F^a_{\rho\mon} \mon \right) \Sigma_a
\right]~.
\end{align}
Note that the gauge charges of $\Gammat^\rho$ are opposite to those of $\Gamma_\rho$.

The task before us is to constrain possible quantum corrections to this superpotential.  This can be accomplished by using the familiar tools of holomorphy~\cite{Seiberg:1994bp,Intriligator:1994jr,Argyres:book}.   Setting $F=0$, we observe that the theory preserves a large non-anomalous global symmetry $\GU(1)_{\text{tot}}$ that assigns charge $+1$ to $\Gammat^\rho$ and $-1$ to $Z_\rho$.  Turning on the $F$ couplings will break this symmetry, but we can still obtain selection rules by assigning charges to the couplings:
\begin{align}
\xymatrix@R=2mm@C=2mm{
~		&\Gammat^\rho		&Z_\rho	&E^a_{\rho\rho'}	& \mon = \prod_\rho Z_\rho	&F^{a}_{\rho\mon}	&q_a~~\\
\GU(1)_{\text{tot}}	&+1				&-1		&0				&-\ell(\mon)				&+\ell(\mon)		&0~,
}
\end{align}
where $\ell(\mon) = \sum_\rho m_\rho$. By assumption $\ell(\mon) > 1$.

When $F =0$ and $E$ is full rank, then large $\sigma_a$ expectation values mass up the $(\Gammat^\rho,Z_\rho)$, and integrating these out at one loop leads to an effective potential~\cite{McOrist:2007kp}  
\begin{align}
\cW_{\text{eff}}^0= \frac{1}{8\pi i} \int d\theta \sum_a \Upsilon^a \Jt_a(q,E,\Sigma) + \text{h.c.}~.
\end{align}
How is this modified if $F^a\neq 0$?  Suppose the $E^a_\rho$ are such that they lead to a smooth A/2 theory.  We do not expect this to be qualitatively modified by turning on small $F$ couplings; hence any correction to $\cW^0_{\text{eff}}$ should be a holomorphic function of the $F^a$; moreover, the superpotential must be linear in the fermi fields $\Upsilon_a$, i.e. we can write the full effective potential as
\begin{align}
\cW_{\text{eff}} = \cW^0_{\text{eff}} +  \sum_{a=1}^k \cJ_a (F,E,q,\Sigma)\Upsilon^a~.
\end{align}
This must be invariant under $\GU(1)_{\text{tot}}$, but any holomorphic dependence of $\cJ_a$ on $F$ necessarily carries a positive $\GU(1)_{\text{tot}}$ charge.  Hence, $\cJ_a = 0$, and the effective potential  is simply $\cW^0_{\text{eff}}$.
 This is consistent with an instanton-by-instanton analysis presented in appendix A of~\cite{Donagi:2011va}, where it is shown that non-linear $E$ deformations do not affect A/2 correlators for a compact  toric variety $X$.

A few additional comments on this result may be useful.  First, when $X$ is non-compact we expect Higgs vacua in the IR as well, so this potential will not describe the full IR dynamics.  This is of course an issue already in (2,2) theories; it has been explored in some detail in~\cite{Melnikov:2006kb}, but there are many interesting questions that still remain about the interplay of the different vacua.  Getting back to compact $X$, when $E^a_\rho$ is not full rank, then, as discussed in the main body of the text, $\cW_{\text{eff}}$ predicts two IR phenomena:  the theory will not have a gap and (0,2) SUSY will be spontaneously broken.  
It is also easy to see that typically the theory will not have a simple relationship to a (0,2) NLSM for $\cE \to X$.  A very simple example of this is a $\C\P^1[1,N]$ (2,2) GLSM deformed so that the E-couplings are of the form $\Gammat^2 \Sigma Z_1^N$.  The UV NLSM has a singularity at $Z_1 = 0$, where a $\Sigma$ branch emerges.  It should be useful to study such theories in more detail.  We suspect that (0,2) SUSY will still be broken, but what can we say about the light degrees of freedom?  

A final caveat concerns any applications to theories that should flow to compact SCFTs.  These applications, while possible~\cite{Morrison:1994fr,McOrist:2008ji}, should be made with care.  The models will typically have both $E$ and $J$ couplings and no massive $\Sigma$ vacua.

\section{Some elliptic genera} \label{app:elgen}
In this appendix we will check that for the $\P^1$ model the elliptic genus is consistent with the patterns of SUSY breaking indicated by the effective potential.  There are three cases to consider:  $E = (\ast,\ast)$, $E=(\ast,0)$, and $E=(0,0)$; here $\ast$ indicates a non-zero entry.  The first and last have been discussed before, for instance in~\cite{Gadde:2013lxa}, but we include them for completeness and comparison.

The flavored elliptic genus of a general (0,2) GLSM is computed by a residue formula described in~\cite{Gadde:2013dda,Benini:2013xpa}.  We will present here just the few details we need for our computation, and the reader is encouraged to consult the references for the full story.

\subsection{A simple set-up}
Suppose we have a $\GU(1)$ (2,2) toric GLSM with gauge charges $Q_\rho >0$ for all (2,2) chiral multiplets.  We deform this to a more general (0,2) theory by E-deformations that preserve a global symmetry group $G$ with maximal torus $T_G \subset G_{\C}$.  We wish to compute the flavored elliptic genus, in other words the Ramond-Ramond $T^2$ partition function $Z(q,y)$, where $q = e^{2\pi i\tau}$ keeps track of the torus complex structure and $y_s = e^{2\pi i z_s}$, $s=1,\ldots, \dim T_G$, are the fugacities that keep track of the characters with respect to $T_{G}$.  In the path integral the $z_s$ should be thought of as holonomies for background gauge fields valued in $T_G$.  The flavored elliptic genus is then given as a residue
\begin{align}
Z = \sum_{v \in M} \oint_{u = v} {\!\!du}~ \frac{\eta(q)^3}{i}  \frac{1}{\theta_1(q,y^{R(\Sigma)})} \prod_{\rho} \frac{ -\theta_1(q,e^{2\pi i u Q_\rho} y^{R(\Gamma^\rho)})}{\theta_1(q,e^{2\pi i u Q_\rho} y^{R({Z_\rho})})} .
\end{align}
Here $u$ denotes the holonomy of the background gauge field, which is integrated over, and $y^R = \prod_s y_s^{R_s}$ are the characters of the indicated multiplets, and $\theta_1(q,y)$ and $\eta(q)$ are the Jacobi and Dedekind functions
\begin{align}
\theta_1 (q,y) &=  -i q^{1/8} y^{1/2} \prod_{k=1}^\infty(1-q^k) (1-yq^k) (1-y^{-1} q^{k-1})~,&
\eta(q) &= q^{1/24} \prod_{n=1}^{\infty} (1-q^n)~.
\end{align}
The integrand is a product of factors associated to the vector, chiral, and fermi multiplets.  When viewed as a function of $u$ the integrand has poles when the $\theta_1$ terms in the denominator develop simple zeroes.  The most interesting and subtle part of the computation is to properly identify the integration contour.  Fortunately, in the case at hand this is very simple:  $M$ includes all poles (modulo $u \sim u+ \Z +\tau\Z$) where $e^{2\pi i Q_\rho u} y^{R({Z_\rho})} = 1$ for some $\rho$.  The residues are easily evaluated by using $\left.\p_u \theta_1(q,e^{2\pi i u})\right|_{u=0} = 2\pi \eta(q)^3$.

Thus, in any particular example all we need to do is to determine the global symmetries and sum up the residues of the integrand.  In doing so, there is one more subtlety:  in general the global symmetries will be anomalous, which will translate into the integrand not being consistent with $u \sim u + \Z+\tau \Z$.  This requires us to restrict the fugacities so that
\begin{align}
\prod_\rho y^{Q_\rho R({\Gamma^\rho}) -  Q_\rho R({Z_\rho})} = 1.
\end{align}
Having reviewed this basic technology, we will now apply it to the simple example of $\P^1$, where $\rho \in \{1,2\}$, and $Q_\rho =1$.

\subsection{$E = (\ast,\ast)$}
We begin with the generic point, which includes the (2,2) locus at $E = (1,1)$.  The couplings preserve a rank $3$ symmetry with the following charge assignments for $T_G$.
\begin{align}
\xymatrix@R=2mm@C=5mm{
\text{symmetry}		
	& Z_1	&Z_2	&\Gamma^1	&\Gamma^2	&\Sigma	&\text{fugacity}\\
\GU(1)_{\text{gauge}}
	&1		&1		&1			&1			&0		&x  \\
\GUL& 0		&0		&-1			&-1			&-1		&y_1\\
\GU(1)_{2}
	&1		&-1		&1			&-1			&0		&y_2
}
\end{align}
As a result, the integrand takes the form
\begin{align}
I &= -\eta(q)^2 \frac{i\eta(q) }{\theta_1(q,y_1^{-1})} 
\frac{ i \eta(q)}{\theta_1(q,xy_2)} 
\frac{ i \eta(q)}{\theta_1(q,x y_2^{-1})} 
\frac{ i\theta_1(q,xy_1^{-1}y_2)}{\eta(q)}
\frac{ i\theta_1(q,xy_1^{-1}y_2^{-1})}{\eta(q)} \nonumber\\[2mm]
&= \frac{\eta(q)^3}{i} \frac{\theta_1(q,xy_1^{-1}y_2)\theta_1(q,xy_1^{-1}y_2^{-1})}{\theta_1(q,y_1^{-1})\theta_1(q,xy_2)\theta_1(q,x y_2^{-1})}
\end{align}
This is gauge-invariant if and only if $y_1^2 =1$.  Summing the residues at $u = \pm z_2$ leads to
\begin{align}
Z = \frac{\theta_1(q,y_1^{-1} y_2^{-2})}{\theta_1(q, y_2^{-2})} + \frac{\theta_1(q,y_1^{-1} y_2^{2})}{\theta_1(q, y_2^{2})} =
2 \frac{ \theta_1(q,y_1^{-1} y_2^{2})}{\theta_1(q, y_2^{2})}~.
\end{align}
Setting $y_1 = 1$, we obtain the correct Witten index:  $Z = 2$;  setting $q \to 0$ and using
\begin{align}
\lim_{q\to 0} \frac{\theta_1(q,z)}{\theta_1(q,w)} = \frac{ 1-z^{-1}}{1-w^{-1}}~,
\end{align}
we obtain $ \lim_{q\to 0} Z = (1+y_1)$.  This leading behavior is reproduced by the effective field theory based on the $J(\Sigma)$ superpotential. 

\subsection{$E = (0,0)$}
Next, we consider the case of a vanishing $E$.  In this case there is a rank $5$ symmetry with the following charges.
\begin{align}
\xymatrix@R=2mm@C=5mm{
\text{symmetry}		&Z_1	&Z_2	&\Gamma^1	&\Gamma^2	&\Sigma	&\text{fugacity}\\
\GU(1)_{\text{gauge}}
	&1		&1		&1			&1			&0		&x  \\
\GUL& 0		&0		&-1			&-1			&-1		&y_1\\
\GU(1)_{2}
	&1		&-1		&0			&0			&0		&y_2 \\
\GU(1)_{3}
	&0		&0		&1			&-1			&0		&y_3 \\
\GU(1)_{4}
	&0		&0		&0			&0			&1		&y_4
}
\end{align}
The integrand is
\begin{align}
I = \frac{\eta(q)^3}{i} \frac{ \theta_1(q,xy_1^{-1} y_3) \theta_1(q,xy_1^{-1} y_3^{-1})}{\theta_1 (q,y_1^{-1} y_4) \theta_1(q,xy_2) \theta_1(q,xy_2^{-1})}~;
\end{align}
as before it is gauge-invariant if and only if $y_1^2 =1$.  Evaluating the residues, which are still just at $xy_2 =1$ and $xy_2^{-1} =1$, we obtain
\begin{align}
Z = \frac{1}{\theta_1(q,y_1^{-1} y_4)}\left[\frac{\theta_1(q,y_1^{-1} y_2^{-1} y_3) \theta_1(q,y_1^{-1} y_2^{-1} y_3^{-1})}{\theta_1(q,y_2^{-2})} +
\frac{\theta_1(q,y_1^{-1} y_2 y_3)\theta_1(q,y_1^{-1} y_2 y_3^{-1})}{ \theta_1(q,y_2^2)}\right]~.
\end{align}
Imposing $y_1^2 =1$ and using $\theta_1(q,y) = -\theta_1(q, y^{-1})$ on the first term in the bracket, we get
\begin{align}
Z =  \frac{\theta_1(q,y_1^{-1} y_2 y_3)\theta_1(q,y_1^{-1} y_2 y_3^{-1})}{\theta_1(q,y_1^{-1} y_4) \theta_1(q,y_2^2)} (-1 +1) = 0.
\end{align}
The index vanishes.  This is consistent with the claim that SUSY is broken.

\subsection{$E=(\ast,0)$}
The previous two cases have already been considered in the literature; this one is new.
 The couplings $\cDb \Gamma^1 = \Phi^1 \Sigma$ and $\cDb \Gamma^2 = 0$ allow a rank $4$ symmetry:
\begin{align}
\xymatrix@R=2mm@C=5mm{
\text{symmetry}		& Z_1	&Z_2	&\Gamma^1	&\Gamma^2	&\Sigma	&\text{fugacity}\\
\GU(1)_{\text{gauge}}
	&1		&1		&1			&1			&0		&x  \\
\GUL& 0		&0		&-1			&-1			&-1		&y_1\\
\GU(1)_{2} 
	&1		&-1		&1			&-1			&0		&y_2 \\
\GU(1)_{3}
	&1		&-1		&0			&0			&-1		&y_3 \\
}
\end{align}
As before, $\GUL$ is the only symmetry with an anomaly, and that again restricts $y_1^2 =1$.
The integrand is
\begin{align}
I = \frac{\eta(q)^3}{i} \frac{ \theta_1(q,xy_1^{-1} y_2) \theta_1(q, x y_1^{-1} y_2^{-1})}{\theta_1(q,y_1^{-1} y_3^{-1}) \theta_1(q, xy_2y_3) \theta_1(q,xy_2^{-1} y_3^{-1})}~.
\end{align}
Taking the residues, we obtain
\begin{align}
Z = \frac{1}{\theta_1(q,y_1^{-1} y_3^{-1})} 
\left[ \frac{\theta_1(q, y_1^{-1} y_3^{-1}) \theta_1(q, y_1^{-1} y_2^{-2} y_3^{-1})}{\theta(q,y_2^{-2} y_3^{-2})}
+ \frac{ \theta_1(q,y_1^{-1} y_2^2 y_3) \theta_1(q, y_1^{-1} y_3)}{\theta(q, y_2^2 y_3^2)}
\right]
\end{align}
This vanishes for the same reason as in the previous section.

\subsection{Parting comments}
It would be interesting to compute the flavored elliptic genus for more examples to explore SUSY--breaking loci further.  For instance, it might be interesting to consider cases where SUSY vacua only exist for special values of the K\"ahler parameters $q_a$ (not to be confused with $q= e^{2\pi i\tau}$ used throughout this appendix),  as in the $\P^1\times \P^1$ theory, where
\begin{align}
E = \begin{pmatrix} 1 & 0 & 1 & 0 \\ 0 &1 & 0 &1
\end{pmatrix}
\end{align}
leads to SUSY vacua if and only if $q_1 = q_2$.  For generic $q_a$, therefore, we expect that the index will vanish.  To find a non-zero index for $q_1 = q_2$ one will probably need to introduce the additional discrete $\Z_2$ symmetry on that locus, which acts by exchanging the multiplets associated to the $\P^1$ factors:
\begin{align}
(Z_1,Z_2,\Gamma^1,\Gamma^2, V_1;Z_3,Z_4,\Gamma^3,\Gamma^4, V_2) \mapsto
(Z_3,Z_4,\Gamma^3,\Gamma^4, V_2;Z_1,Z_2,\Gamma^1,\Gamma^2, V_1)~.
\end{align}

\section{Solutions to the QSCR and $\chi(X)$} \label{app:bernstein}
The main goal of this appendix is to prove the following theorem; in addition, we give a short review of the combinatorics that determine the discriminant locus of the QSCR for any NEF Fano $X$.

\begin{thm}\label{thm:QSCREuler} The degree of the spectral cover of the GLSM of a smooth NEF Fano toric variety $X$  is given by the Euler characteristic of $X$.
\end{thm}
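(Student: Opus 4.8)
The plan is to read the degree of the spectral cover as the cardinality of a generic fibre of the map $\Phi:(\C^\ast)^k \to (\C^\ast)^k$, $\Phi(\sigma)_a = \prod_\rho (\sigma\cdot E_\rho)^{Q^a_\rho}$, and to compute this degree by Bernstein's theorem after a change of variables that removes the degeneracy responsible for the naive overcount $\prod_a \Delta^a$. First I would factor $\Phi$ as a generic linear inclusion followed by a monomial map. Since $E$ has full rank $k$, the linear forms $x_\rho = \sigma\cdot E_\rho$ identify $(\C^\ast)^k$ with $V_E \cap (\C^\ast)^n$, where $V_E = \im E^{\mathsf{T}}\subset \C^n$ is a generic $k$-plane (recall $E\mapsto V_E$ is dominant onto $\Gr(k,n)$), and on this locus the QSCR read $\prod_\rho x_\rho^{Q^a_\rho} = q_a$. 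This is exactly $\psi(x) = q$ for the monomial map $\psi:(\C^\ast)^n \to (\C^\ast)^k$ attached to the charge matrix $Q$, so the degree equals $\#\,(V_E \cap \psi^{-1}(q))$ for generic $(E,q)$.

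Next I would parametrise the fibre $\psi^{-1}(q)$. Because $X$ is smooth, the sequence $0\to M \to \Z^{\Sigma_X(1)} \to \Pic(X)\to 0$ is exact with torsion-free cokernel, so $\ker\psi = T_N$ is a genuine $d$-dimensional subtorus whose cocharacter lattice is spanned by the primitive generators $u_\rho$; hence $\psi^{-1}(q)$ is a $T_N$-torsor, parametrised bijectively by $\tau\in(\C^\ast)^d$ via $x_\rho = (x_0)_\rho\,\tau^{u_\rho}$ for a fixed base point $x_0$. Imposing the $d$ linear conditions that cut out $V_E$ then yields a system of $d$ Laurent equations
\begin{align}
\sum_\rho c^{(j)}_\rho\, \tau^{u_\rho} = 0, \qquad j = 1,\ldots,d,
\end{align}
in the $d$ torus variables $\tau$, where each equation has Newton polytope $P = \Conv\{u_\rho : \rho\in\Sigma_X(1)\}$ and the coefficients $c^{(j)}_\rho$ are generic for generic $(E,q)$. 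Bernstein's theorem then produces the count as the mixed volume $\MV(P,\ldots,P) = d!\,\Vol(P)$, the normalized volume of the convex hull of the ray generators.

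It then remains to prove the combinatorial identity $d!\,\Vol(P) = \chi(X)$, and here I would use the NEF Fano hypothesis in the form that the support function of $-K_X$, which takes the value $1$ on every $u_\rho$, is convex. Convexity means that for each maximal cone $\sigma$ with rays $u_{\rho_1},\ldots,u_{\rho_d}$ there is $m_\sigma\in M_{\R}$ with $\la m_\sigma, u_{\rho_i}\ra = 1$ and $\la m_\sigma, u_\rho\ra \le 1$ for all $\rho$; thus the hyperplane $\{\la m_\sigma,\cdot\ra = 1\}$ supports $P$ along the rays of $\sigma$, the simplices $\Conv(0,u_{\rho_1},\ldots,u_{\rho_d})$ tile $P$ (note $0\in\inter P$ since $\Sigma_X$ is complete), and smoothness makes each of them unimodular of normalized volume $1$. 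Summing over maximal cones gives $d!\,\Vol(P) = \#\{\text{maximal cones of }\Sigma_X\} = \chi(X)$.

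The main obstacle is genericity in the application of Bernstein's theorem: the original system is degenerate at infinity, which is precisely why the naive bound $\prod_a\Delta^a$ overshoots (e.g.\ $6$ versus $\chi(\F_1)=4$), so I must verify that the reformulated $\tau$-system is \emph{non-degenerate} at infinity for generic $(E,q)$ and hence attains the mixed-volume bound. I would establish this either by showing the coefficient map $(E,q)\mapsto (c^{(j)}_\rho)$ is dominant onto the space of Laurent systems with support $\{u_\rho\}$, or by checking directly that each facial subsystem has no solution in the torus. A secondary subtlety, which genuinely uses NEF rather than strict Fano, is that some $u_\rho$ need not be vertices of $P$; the convexity argument above is designed to accommodate this, since the supporting hyperplane of a maximal cone may contain several rays, but the tiling step should be checked carefully under that degeneration.
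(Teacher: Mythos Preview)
Your argument is correct and reaches the same endpoint as the paper --- the identity $d!\,\Vol_d\big(\Conv(u_{\rho_1},\ldots,u_{\rho_n})\big)=\chi(X)$ --- but by a genuinely different and somewhat cleaner route.  The paper also introduces the variables $x_\rho=\sigma\cdot E_\rho$ and the Gale dual constraints $\Eh\cdot x=0$, but then applies Bernstein's theorem in all $n$ variables to the combined system of $k$ binomial equations $\prod_\rho x_\rho^{Q^a_\rho}=q_a$ and $d$ linear equations.  This produces an $n$-dimensional mixed volume, which the paper factors via a separation lemma into $\MV_k(S_1,\ldots,S_k)\cdot\MV_d(\pi(S_{k+1}),\ldots,\pi(S_n))$; evaluating the first factor as $\sqrt{\det QQ^T}$ and normalising the second by $1/\sqrt{\det A^TA}$ then requires an auxiliary index identity (lemma~\ref{lem:detdet}, $\det A^TA=\det QQ^T$) to obtain the clean answer.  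You instead solve the monomial equations \emph{first} by parametrising the fibre $\psi^{-1}(q)$ as a $T_N$-torsor, which reduces the problem to $d$ Laurent equations in $d$ torus variables with common Newton polytope $P=\Conv\{u_\rho\}$, so Bernstein gives $d!\,\Vol(P)$ immediately.  This bypasses both the separation lemma and the determinant identity; in effect, your change of variables performs that factorisation at the level of the equations rather than at the level of mixed volumes.

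On your flagged obstacle: the genericity is genuine but easy to dispatch along the lines you sketch.  The coefficient matrix $c^{(j)}_\rho=\Eh^j_\rho\,(x_0)_\rho$ depends on $(E,q)$ only through the row span of $\Eh\cdot\diag(x_0)$, and the solution set of the $\tau$-system is $\GL(d,\C)$-invariant in the $j$-index, so the Bernstein non-degeneracy locus descends to an open dense subset of $\Gr(d,n)$.  Since $E\mapsto[\Eh]$ is the duality isomorphism $\Gr(k,n)\simeq\Gr(d,n)$ and column rescaling by $(x_0)_\rho\in\C^\ast$ is an automorphism of $\Gr(d,n)$, generic $(E,q)$ lands in the non-degenerate locus.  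Your NEF-Fano tiling argument is also correct and is precisely what the paper invokes in its final step; the inequality $\langle m_\sigma,u_\rho\rangle\le 1$ is exactly the nef condition for $-K_X$, and it yields $\langle m_\sigma,p\rangle\le 1$ for any $p\in P$, so each such $p$ lies in the simplex $\Conv(0,u_{\rho_1},\ldots,u_{\rho_d})$ of the maximal cone containing it.
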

Note that the degree of the spectral cover is just the number of solutions $(\sigma_1,\ldots,\sigma_k) \in \C^k$ to the QSCR with generic $E$ and $q$ parameters.

The first step in the proof is to recast the QSCR of (\ref{eq:QSCR}) in terms of the Gale dual~\cite{Cox:2011tv} of the $E$ matrix, $\Eh$.  For our purposes, all we need to know about $\Eh$ is that it is a $d\times n$ matrix such that solutions to $\Eh \cdot z = 0$ are given by $z_\rho = \sum_a \sigma_a E^a_\rho $.  With this, we recast the QSCR as the system
\begin{align}
\prod_\rho z_\rho^{Q^a_\rho} &= q_a~,&
\Eh \cdot z & = 0~.
\end{align}
As long as the A/2 theory is non-singular $\sigma_a$ for every solution must be finite, and this in turn requires the $z_\rho$ to be finite for every solution.  Since $X$ is a compact variety with a pointed secondary fan, it follows that we can choose a basis for the charges $Q^a_\rho$ such that $Q^1_\rho > 0$ for all $\rho$.  Thus, as long as $q_1 \in \C^\ast$, every $z_\rho$ is non-vanishing.  So, for generic parameters every solution corresponds to $z \in (\C^\ast)^n$, and these can be counted by applying the well-known theorem of Bernstein~\cite{Bernstein:1975nr,Fulton:1993tv,Sturmfels:2002sp}.
\begin{thm}[Bernstein]
Let $( f_1,\ldots, f_n) \in \C[z_1^{\pm1}, z_2^{\pm1},\ldots,z_n^{\pm1}]$ be a system of $n$ Laurent polynomials with supports $(S_1,\ldots, S_n)$ in lattice  $\Mt\simeq \Z^{n}$ and generic coefficients.\footnote{The lattice $\Mt$ should not be confused with the $M\simeq \Z^d$ lattice associated to the toric variety $X$.}  The number of solutions $z \in (\C^\ast)^n$ is then counted by the mixed volume $\MV_n(S_1,\ldots,S_n)$.\footnote{ For a pedagogical description of mixed volumes, we refer to~\cite{Cox:1998ua,Steffens:2009mixed}.  The general definition is as follows:  given convex bodies $K_1$, \ldots, $K_l$ in $\R^l$, we compute the volume of the Minkowski sum $\lambda_1 K_1+ \cdots+\lambda_l K_l$ for real non-negative parameters $\lambda$.  This is a homogeneous degree $l$ polynomial in the $\lambda$, and  the mixed volume $\MV_l(K_1,\ldots, K_l)$ is the coefficient of $\lambda_1\lambda_2\cdots\lambda_l$.}  Moreover, there is an explicit combinatorial description of the non-generic locus.
\end{thm}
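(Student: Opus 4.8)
The plan is to compactify the torus $(\C^\ast)^n$ to a projective toric variety adapted to the Newton polytopes, reinterpret the number of solutions as a toric intersection number, and then invoke the standard identity expressing intersection numbers of nef divisors as a mixed volume. Write $P_i = \Conv(S_i) \subset \Mt\otimes \R$ for the Newton polytope of $f_i$, and choose a smooth complete fan $\Sigma$ in $N_{\R}$ (with $N = \Mt^\vee$) that simultaneously refines the inner normal fans of all the $P_i$. The associated smooth projective toric variety $X_\Sigma$ contains $(\C^\ast)^n$ as its dense orbit, and each $P_i$ determines a basepoint-free (hence nef) torus-invariant Cartier divisor $D_i$ whose sections are spanned by the monomials $z^a$ with $a \in P_i$. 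Up to an overall monomial factor, each $f_i$ is then a global section $s_i$ of $\cO(D_i)$.

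First I would reduce the count to the intersection number $D_1 \cdots D_n$. For generic coefficients a Bertini--Kleiman transversality argument shows that the zero loci of the $s_i$ meet in finitely many reduced points, so the number of common zeros on all of $X_\Sigma$ equals $D_1 \cdots D_n$. The real content is that none of these zeros lie on the toric boundary $X_\Sigma \setminus (\C^\ast)^n$, so that every intersection point is an honest solution in $(\C^\ast)^n$. This I would check orbit by orbit: on the orbit $O_\sigma$ attached to a cone $\sigma \in \Sigma$, the section $s_i$ restricts to the face polynomial $f_i^\sigma$ supported on the face of $P_i$ selected by $\sigma$, and the $s_i$ avoid $O_\sigma$ precisely when the facial system $\{f_i^\sigma = 0\}$ has no common zero in $O_\sigma$. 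Since $\dim O_\sigma = n - \dim \sigma < n$ while there are still $n$ equations, the facial system is overdetermined for every $\sigma \neq \{0\}$ and therefore has no solution for generic coefficients; this is exactly Bernstein's genericity hypothesis.

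Next I would identify $D_1 \cdots D_n$ with $\MV_n(P_1,\ldots,P_n)$. Both sides are symmetric and multilinear --- the intersection product under tensoring line bundles, and the mixed volume by its definition as the polarization of $\Vol$ with respect to Minkowski sum. They agree on the diagonal: $D^n = n!\,\Vol(P)$, which is Kushnirenko's special case, computable directly as the degree of $X_\Sigma$ under the morphism defined by $|D|$ (equivalently, the number of common zeros of $n$ generic sections of a single line bundle); this matches $\MV_n(P,\ldots,P) = n!\,\Vol(P)$ under the normalization fixed in the statement. Since a symmetric multilinear form is determined by its values on the diagonal, the two functions coincide, giving $D_1 \cdots D_n = \MV_n(P_1,\ldots,P_n)$ and completing the generic count.

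The hard part will be the boundary-escape control of the second paragraph --- ruling out solutions that migrate to the toric strata at infinity --- since this is what makes the naive intersection number an honest count in $(\C^\ast)^n$, and it is the combinatorial heart of the statement. It also yields the promised description of the non-generic locus: the coefficient values for which the count drops below $\MV_n$ are contained in a finite union, over the cones $\sigma \neq \{0\}$, of the loci where the facial subsystem $\{f_i^\sigma\}$ acquires a common zero in $O_\sigma$. Each such locus is cut out by the vanishing of a sparse (toric) resultant of the face polynomials, an explicit algebraic condition on the coefficients; their union is the Bernstein-degenerate locus, and away from it the number of solutions in $(\C^\ast)^n$ equals the mixed volume exactly.
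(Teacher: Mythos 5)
The paper offers no proof of this statement: Bernstein's theorem is quoted as a known result, with references \cite{Bernstein:1975nr,Fulton:1993tv,Sturmfels:2002sp}, and is then used as the key input to the proof of theorem~\ref{thm:QSCREuler}. So there is no internal argument to compare yours against. What you have written is essentially the classical toric-compactification proof of the Bernstein--Khovanskii--Kushnirenko theorem, i.e.\ the proof found in the cited reference \cite{Fulton:1993tv}: compactify $(\C^\ast)^n$ to a smooth projective $X_\Sigma$ whose fan refines all the normal fans, regard each $f_i$ as a section of the globally generated line bundle attached to $P_i = \Conv(S_i)$, check that for generic coefficients no common zero lies on the toric boundary, and identify $D_1\cdots D_n$ with $\MV_n(P_1,\ldots,P_n)$ by polarization from the diagonal case $D^n = n!\,\Vol(P)$. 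The outline is correct, including the normalization bookkeeping (the paper's convention indeed gives $\MV_n(P,\ldots,P)=n!\,\Vol(P)$) and the characterization of the degenerate locus by zeros of facial systems, which is exactly the second half of Bernstein's statement.

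Two steps are asserted rather than proved and would need care in a complete write-up. First, Kleiman transversality does not apply verbatim: $X_\Sigma$ is not a homogeneous space, since the torus orbits have varying dimensions. The standard fix is iterated Bertini for basepoint-free linear systems (each $|D_{P_i}|$ is basepoint-free, so a generic member is smooth; restrict to it and repeat), or generic smoothness applied to the universal solution scheme over the space of coefficients. Second, the claim that an overdetermined facial system has no zero in $O_\sigma$ for generic coefficients requires an incidence-variety dimension count: for fixed $z \in O_\sigma$, each equation $f_i^\sigma(z)=0$ is a nonzero linear condition on the coefficients (monomials do not vanish on the torus), so the incidence locus has codimension $n$ over the $(n-\dim\sigma)$-dimensional orbit and hence projects into a proper closed subvariety of coefficient space, which generic coefficients avoid; one should also note that the facial coefficients are a coordinate projection of the full coefficient set, so genericity passes down. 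Both are routine repairs; with them your proposal is a complete and correct proof.
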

We now apply the theorem to our specific system, which has the supports
\begin{align}
S_a &= \{(0,\ldots,0), (Q^a_1,Q^a_2,\ldots,Q^a_n) \}~, & a&=1,\ldots, k~,\nonumber\\
S_{k+1} & =S_{k+2} = \cdots = S_{k+d} = \{e_1,e_2,\ldots, e_n\} ~,
\end{align}
where the $e_\rho$ denote the standard basis for $\Mt$.

Since $X$ is projective and simplicial, we have the exact sequences( see theorem 6.4.1 of~\cite{Cox:2011tv} for more details)
\begin{align}
\label{eq:2ses}
\xymatrix{ 
0 \ar[r] & M \ar[r]^-{\alpha} & L \ar[r]^-{\beta} & \Pic(X) \ar[r] & 0~~,~ \\
0 \ar[r] & N_1(X) \ar[r]^-{\beta^\ast} &L^\ast \ar[r]^-{\alpha^\ast} & N \ar[r] & 0 ~~,
}
\end{align}
where $N_1(X) = \Pic(X)^\ast$ is the free abelian group generated by complete irreducible curves in $X$ modulo numerical equivalence, $L = \Z^{\Sigma_X(1)}$, and the maps are given by
\begin{align}
\alpha^\ast(e_\rho) &= u_\rho~,&
\beta^\ast([C]) & = D_\rho \cdot C~,
\end{align}
where $u_\rho$ denotes the primitive lattice vector on the ray $\rho$. 
More concretely, the maps $\alpha$ and $\beta$ can be represented by an $n\times d$ matrix $A$ and an $k\times n$ matrix $Q$.  For instance, for $X = \P^1\times \P^1$, we have
\begin{align}
A^T &= \begin{pmatrix} 1 & -1 & 0 & 0 \\ 0 & 0 & 1 & -1 \end{pmatrix}~,&
Q & = \begin{pmatrix} 1 & 1 & 0 & 0 \\ 0 & 0 & 1 & 1 \end{pmatrix}~.
\end{align}
We now obtain the following lemma.
\begin{lem}\label{lem:detdet} If $X$ is smooth and projective, then $\det A^T A = \det Q Q^T$.
\end{lem}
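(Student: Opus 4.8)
The plan is to reduce the equality of these two differently sized Gram determinants to a duality between the maximal minors of $A$ and $Q$, and then to sum those minors with the Cauchy--Binet formula.

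The only consequence of exactness I need is that $\im\alpha=\ker\beta$, i.e.\ $QA=0$; thus the $d$ columns of $A$ span the rational subspace $V=\ker(Q)\otimes\R$, while the $k$ columns of $Q^T$ span its orthogonal complement $V^\perp$ (the row space of $Q$). In other words $A$ and $Q$ form a Gale-dual pair. Applying Cauchy--Binet to the $n\times d$ matrix $A$ and to the $k\times n$ matrix $Q$ gives
\begin{align}
\det(A^T A)=\sum_{|S|=d}(\det A_S)^2~,\qquad \det(QQ^T)=\sum_{|T|=k}(\det Q_T)^2~,
\end{align}
where $A_S$ is the $d\times d$ submatrix of $A$ on the row set $S\subseteq\{1,\dots,n\}$ and $Q_T$ is the $k\times k$ submatrix of $Q$ on the column set $T$. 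Since $d+k=n$, complementation $S\mapsto\bar S=\{1,\dots,n\}\setminus S$ is a bijection between the two index sets, so the lemma reduces to the claim
\begin{align}
|\det A_S|=|\det Q_{\bar S}|\qquad\text{for all }S\text{ with }|S|=d~.
\end{align}

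This complementary-minor identity is the crux, and it is where smoothness is used. The tuples $(\det A_S)_{|S|=d}$ and $(\det Q_T)_{|T|=k}$ are, up to an overall scalar, the Pl\"ucker coordinates of $V$ and of $V^\perp$; because these subspaces are orthogonal complements, the two coordinate tuples coincide up to the sign attached to complementation and a single nonzero constant. To force that constant to be a unit I would invoke primitivity: smoothness makes $\Pic(X)$ free, so the sequence is an exact sequence of free $\Z$-modules, whence $Q$ is surjective and $\im A=\ker Q$ is saturated. Surjectivity of $Q$ makes the gcd of its maximal minors equal to $1$, and saturation of $\im A$ does the same for $A$; two primitive integer vectors that are proportional must agree up to sign, which yields the claim and hence the lemma.

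I expect the sole genuine obstacle to be this normalization step---showing that surjectivity of $Q$ together with saturation of $\ker Q$ are exactly the integrality inputs that pin the proportionality constant to $\pm1$ rather than to some nonzero rational. As a cross-check and fallback I would keep the block-matrix argument in reserve: setting $M=(A\mid Q^T)$ and using $QA=0$ gives $M^TM=\diag(A^TA,\,QQ^T)$ and hence $(\det M)^2=\det(A^TA)\,\det(QQ^T)$, after which one identifies $|\det M|=[\Z^n:\im A\oplus\im Q^T]$ with each factor separately through the covolume duality between a primitive sublattice and its orthogonal complement.
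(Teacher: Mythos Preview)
Your argument is correct and genuinely different from the paper's.  The paper proceeds lattice-theoretically: choosing a unimodular pairing on $L$ to identify $L\simeq L^\ast$, it interprets $\det A^TA$ and $\det QQ^T$ as the indices of the composite maps $\alpha^\ast\alpha:M\to N$ and $\beta\beta^\ast:N_1(X)\to\Pic(X)$, and then shows these indices coincide by a short chain of isomorphism-theorem identities exploiting $\Pic(X)\simeq L/M$ and $N\simeq L^\ast/N_1(X)$.  Your route is instead purely linear-algebraic: Cauchy--Binet reduces both Gram determinants to sums of squared maximal minors, and Gale duality of the pair $(A,Q)$ matches those minors term-by-term under complementation, with smoothness (freeness of $\Pic(X)$) entering exactly where you say, to force the proportionality constant to $\pm1$ via primitivity of both minor vectors.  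The paper's argument is more conceptual and ties the equality directly to the self-dual structure of the two exact sequences; yours is more hands-on, makes the role of smoothness completely explicit as a gcd condition, and would be easier to adapt or to verify by direct computation in examples.  Your block-matrix fallback $(\det M)^2=\det(A^TA)\det(QQ^T)$ with $M=(A\mid Q^T)$ is in fact closer in spirit to the paper's index argument, since identifying $|\det M|$ with each Gram factor separately is essentially the covolume duality the paper is invoking.
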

\begin{proof}
Since $X$ is smooth and projective, $\Pic(X) \simeq \Z^k$.  Choose a unimodular definite pairing on $L$.  This yields an isomorphism $\vphi : L \simeq L^\ast$ and therefore injective maps
\begin{align}
\alpha^\ast\vphi\alpha &: M \to N~,&
\beta \vphi^{-1} \beta^\ast &:  N_1(X) \to \Pic(X)~.
\end{align}
Moreover, these maps have equal indices,  $[N : M] = [\Pic(X) : N_1(X)]$, since
\begin{align}
[\Pic(X) :N_1(X)] = [L/M : N_1(X)] = [L : M+N_1(X)] = [L^\ast/N_1(X): M] = [N: M]~.
\end{align}
But, since $\vphi$ is unimodular, this is equivalent to $\det A^T A = \det Q Q^T$.


\end{proof}

Using the unimodular pairing on $L$ we can decompose
%
\begin{align}
L = \beta^\ast (N_1(X)) \oplus \left[  \beta^\ast (N_1(X))\right]^{\perp}~,
\end{align}
with the second factor isomorphic to $N_{\R}$ under the restriction of $\alpha^\ast$. This basic linear algebra allows us to apply a ``separation lemma'' to our mixed volume computation:
\begin{lem}[Lemma 2.6 of~\cite{Steffens:2009mixed}] Let $S_1,\ldots, S_k$ be polytopes in $\R^{k} \subset\R^{k+d}$ and $S_{k+1},\ldots, S_{k+d}$ be polytopes in $\R^n$.  Then
\begin{align*}
\MV_n(S_1,\ldots,S_n) = \MV_k(S_1,\ldots, S_k) \MV_{d} (\pi(S_{k+1}),\ldots, \pi(S_{k+d}))~, 
\end{align*}
where $\pi$ is the projection $\pi : \R^{k+d} \to \R^d$.
\end{lem}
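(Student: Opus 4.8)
The plan is to prove the factorization using only the elementary structural properties of the mixed volume, reading $\MV_n$ off the generating polynomial $\Vol_n(\lambda_1 S_1 + \cdots + \lambda_n S_n)$ exactly as in the definition quoted in the footnote above. Write $\R^{n}=\R^{k+d}$ with the splitting into the kernel of $\pi$, which is the coordinate subspace $\R^k$ that contains $S_1,\ldots,S_k$, together with a complementary $\R^d$ on which $\pi$ restricts to an isomorphism. I would invoke four standard facts about $\MV_n$: it is symmetric, translation invariant, and multilinear (additive and positively homogeneous under Minkowski sums) in each slot; it is monotone under inclusion in each slot; and it vanishes whenever some $m$ of its arguments lie in a common affine subspace of dimension strictly less than $m$.

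First I would reduce the dependence of the left-hand side on the ``vertical'' bodies $S_{k+1},\ldots,S_{k+d}$ to their $\pi$-images. The key observation is that for any polytope $Z\subset\ker\pi$, placing $Z$ in one slot alongside $S_1,\ldots,S_k$ produces $k+1$ bodies all confined to the $k$-dimensional subspace $\ker\pi$, so that mixed volume vanishes; by multilinearity this means that adding any Minkowski summand lying in $\ker\pi$ to one of the $S_{k+j}$ leaves $\MV_n$ unchanged. Writing $\tilde S_{k+j}=\{0\}\times\pi(S_{k+j})$ for the lift of the projection and $H_{k+j}\subset\ker\pi$ for the projection of $S_{k+j}$ to $\ker\pi$, one has the inclusions $\tilde S_{k+j}\subseteq S_{k+j}+(-H_{k+j})$ and $S_{k+j}\subseteq \tilde S_{k+j}+H_{k+j}$. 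Sandwiching with monotonicity and the invariance just established then forces
\[
\MV_n(S_1,\ldots,S_k,S_{k+1},\ldots,S_{k+d}) = \MV_n(S_1,\ldots,S_k,\tilde S_{k+1},\ldots,\tilde S_{k+d}),
\]
applied one vertical slot at a time (the bodies $S_1,\ldots,S_k$ remain in $\ker\pi$ throughout, so the vanishing argument stays available). Hence we may assume every vertical body genuinely lives in the complementary $\R^d$.

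With the arguments now split as $S_1,\ldots,S_k\subset\R^k$ and $\tilde S_{k+1},\ldots,\tilde S_{k+d}\subset\R^d$, the Minkowski sum $\sum_i\lambda_i S_i + \sum_j\lambda_{k+j}\tilde S_{k+j}$ is a genuine product region in $\R^k\times\R^d$, so its volume factors as $\Vol_k(\sum_i\lambda_i S_i)\cdot\Vol_d(\sum_j\lambda_{k+j}\pi(S_{k+j}))$. Extracting the coefficient of $\lambda_1\cdots\lambda_{k+d}$ from a product of a polynomial in $(\lambda_1,\ldots,\lambda_k)$ times a polynomial in $(\lambda_{k+1},\ldots,\lambda_{k+d})$ immediately gives $\MV_k(S_1,\ldots,S_k)\,\MV_d(\pi(S_{k+1}),\ldots,\pi(S_{k+d}))$, as claimed. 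I expect the genuine content to lie entirely in the projection-invariance step of the second paragraph: once the vertical bodies are known to contribute only through their $\pi$-images the product computation is automatic, whereas justifying that reduction cleanly is exactly where the vanishing property does the work, triggered by the fact that $S_1,\ldots,S_k$ already fill out the $k$-dimensional subspace $\ker\pi$.
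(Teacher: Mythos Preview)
Your argument is correct. The paper does not supply its own proof of this lemma: it is quoted as Lemma~2.6 of Steffens' thesis and applied as a black box in the count of QSCR solutions, so there is nothing in the paper to compare your approach against. The reduction step---using the vanishing of the mixed volume when $k+1$ arguments are confined to the $k$-dimensional subspace $\ker\pi$ to see that Minkowski summands in $\ker\pi$ added to the vertical slots are harmless, then sandwiching via monotonicity to replace each $S_{k+j}$ by the lift $\tilde S_{k+j}$ of its projection---is the standard route, and once that is in hand the product factorization of the volume and the coefficient extraction are indeed automatic.
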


We now use basic properties of mixed volumes to evaluate the two factors.  We begin with $\MV_k(S_1,\ldots, S_k)$.  Each $S_a$ is a one-dimensional polytope consisting of the origin and the vector $Q_a \in \R^n$.  Since these vectors are linearly independent, we have $\MV_k(S_1,\ldots,S_k) = \Vol_k (\cP)$, where $\cP$ is the parallelotope generated by the $k$ vectors $Q_a$.\footnote{Concretely, $\cP =\{ \sum_a t_a Q_a ~~|~~0\le t_a \le 1\} \subset\R^k \subset \R^n$.}  To compute this parallelotope volume, we note that $\Vol_k (\cP) = \Vol_n (\cP')$, where
\begin{align}
\cP' = \{ \textstyle\sum_a t_a Q_a + \sum_{\alpha=1}^d s_\alpha U_\alpha ~~|~~ 0 \le t_a,s_\alpha \le 1\}~,
\end{align} 
where $\{U_\alpha\}$ is an orthonormal basis for $\beta^\ast(N_1(X))^\perp$.  But then
\begin{align}
\Vol_n (\cP') =  \left| \det \begin{pmatrix} Q_1 &Q_2& \cdots& Q_k& U_1 &U_2 &\cdots& U_d \end{pmatrix} \right|,
\end{align}
whence
\begin{align}
\Vol_k (\cP)^2 = \Vol_n(\cP')^2 = \det Q Q^T~.
\end{align}
Next, we turn to the second factor and use a basic property of mixed volumes:
\begin{align}
\MV_d(\underbrace{P, \ldots, P}_{\text{$d$ times}} ) = d! \Vol_d (P)~.
\end{align}
Thus, since $S_{k+1}=S_{k+2}=\cdots=S_{n} = \Conv(e_1,\ldots,e_n)$, and $\alpha^\ast \pi (e_\rho) = u_\rho$, we find
\begin{align}
 \MV_{d} (\pi(S_{k+1}),\ldots, \pi(S_{k+d})) = d! \Vol_d (\Conv(u_1,u_2,\ldots, u_n)) \times \frac{1}{\sqrt{\det A^T A}}~.
\end{align}
The last factor is a normalization:  under the isomorphism $N_{\R} \to [\beta^\ast(N_1(X))]^\perp$, the unit parallelotope is mapped to a parallelotope with volume $\sqrt{\det A^T A}$.  Putting the factors together and using lemma~\ref{lem:detdet}~, we find that for generic parameter values the number of solutions to the QSCR is given by
\begin{align}
d! \Vol_d \Conv(u_1,u_2,\ldots, u_n)~.
\end{align}
To complete the proof of theorem~\ref{thm:QSCREuler}, we note that when $X$ is a smooth NEF Fano variety $\Conv(u_1,\ldots, u_n)$ can be subdivided into $\chi(X)$ $d$-dimensional simplices (one for each maximal cone in $\Sigma_X$), each with volume $1/d!$, and the assertion follows.

\subsection*{The discriminant locus}
Bernstein's theorem counts solutions to Laurent polynomial systems for generic values of the coefficients.   In addition there is a specific combinatorial condition that determines the non-generic locus~\cite{Bernstein:1975nr}.  When applied to the QSCR of a NEF Fano toric variety with $q_a \in \C^\ast$ this determines the A/2 discriminant in terms of the combinatoric structure.  As we saw in section~\ref{s:examples}, we do not need such machinery for simple examples.  Since it should be quite useful in analyzing more complicated examples, we summarize the relevant results of~\cite{Bernstein:1975nr} here.

Let $S = S_1 + \cdots S_n $ be the Newton polytope obtained by taking the Minkowski sum of the supports of the QSCR.\footnote{Note we will slightly abuse notation here and conflate the supports with their convex hulls.}  The polytope has the fan  $\Sigmat \subset \Nt_{\R}$, where $\Nt = \Mt^\ast$ is the dual lattice, and this fan defines a toric variety $\Xt$ that contains $T_n$ as its dense torus.  For any $w \in \Nt_{\R}$, Bernstein defines 
\begin{align}
m(w,S) = \min \left\{ \la w,p \ra~, p \in S \subset \Mt\right\}~,
\end{align}
and
\begin{align}
S^w = \left\{p \in S \subset \Mt ~~|~~ \la w,p \ra = m (w, S) \right\}~.
\end{align}
For any function $f = \sum_{p \in S} c_p z^p$, let $f^w = \sum_{p \in S^w} c_p z^p$.  For a system 
$F = (f_1,\ldots,f_n)$, where $f_i$ is supported on $S_i$, $F^w = (f_1^w,\ldots,f^w_n) = F^{w'}$ if  $w$ and $w'$ lie in the relative interior of the same cone $\sigmat \subset \Sigmat$.  The parameters $c_p$ are then on the discriminant locus (i.e. are non-generic) if and only if $F^w$ has roots in $T_n$ for some $w \neq 0$.

\section{Non-SUSY loci} \label{app:nonsusy}
In this appendix we provide the details for the non-SUSY loci for the models considered in the text.

\subsection{$X=\P^1\times\P^1$} \label{app:nonSUSYP1P1}
In this appendix we provide the details for the non-SUSY locus of the $\P^1\times\P^1$ model.  First we observe that whenever a column of the $E$ matrix vanishes, there are no SUSY vacua.  These loci correspond to the $\P^2_i$ defined by the simultaneous vanishing of $\zeta_{ij} = \zeta_{ik} =\zeta_{il} = 0$ for distinct $j,k,l \neq i$.  To find the remaining non-SUSY loci, we consider three cases:  $\zeta_{12} \neq 0$, $\zeta_{34} \neq 0$, and $\zeta_{12} = \zeta_{34} = 0$, and in each describe the possible degenerations of QSCR.\\

\noindent
When $\zeta_{12} \neq 0$, we use $\GL(2,\C)$ to bring $E$ to the following form:
\begin{align}
E =\begin{pmatrix} 1 & 0 & -\zeta_{23}/\zeta_{12}	&-\zeta_{24} /\zeta_{12} \\ 0 & 1 & \zeta_{13}/\zeta_{12} & \zeta_{14}/\zeta_{12} \end{pmatrix}~.
\end{align}
The QSCR can now be written as
\begin{align}
\sigma_1\sigma_2 & =q_1~,&
\zeta_{23}\zeta_{24} \sigma_1^4 - \Delta_{12} \sigma_1^2 + q_1^2 \zeta_{13} \zeta_{14} & = 0~,
\end{align}
where
\begin{align}
\Delta_{12} = \zeta_{12}^2 q_2 + (\zeta_{12}\zeta_{24} + \zeta_{14}\zeta_{23}) q_1~.
\end{align}
Generically, there are $4$ solutions, but there are more interesting possibilities.
\begin{enumerate}[(a)]
\item $\zeta_{23}\zeta_{24} = 0$, $\Delta_{12} = 0$, $\zeta_{13} \zeta_{14} \neq 0$.
\begin{enumerate}[(i)]
\item 
If $\zeta_{23} = 0$, then $\Delta_{12} = \zeta_{12} \Delta_+$; 
\item if $\zeta_{24} = 0$, then $\Delta_{12} = \zeta_{12} \Delta_-$, 
\end{enumerate}
where $\Delta_\pm = \zeta_{12} q_2 + \zeta_{34} q_1$ and we used $P=0$.
\item $\zeta_{13}\zeta_{14} = 0$, $\Delta_{12} = 0$, $\zeta_{23}\zeta_{24} \neq 0$.  
\begin{enumerate}[(i)]
\item
If $\zeta_{13} = 0$ then $\Delta_{12} = \zeta_{12} \Delta_-$;
\item if $\zeta_{14} = 0$, then $\Delta_{12} =\zeta_{12} \Delta_+$.
\end{enumerate}
\item $\zeta_{23}\zeta_{24} = 0$, $\zeta_{13}\zeta_{14} = 0$, $\Delta_{12} \neq 0$.  There are four possibilities here.  When we intersect these with $P=0$ we find the following.
\begin{enumerate}[(i)]
\item $\zeta_{23} = 0$, $\zeta_{13}=0$; $\implies$ $\zeta_{34} = 0$, $\Delta_{12} \neq 0$, i.e. it is the locus $\P^2_3$ in the $\zeta_{12} \neq 0$ patch;
\item $\zeta_{24} = 0$, $\zeta_{14} = 0$; $\implies$ $\zeta_{34} = 0$, $\Delta_{12} \neq 0$, i.e. it is the locus $\P^2_4$ in the $\zeta_{12} \neq 0$ patch;
\item $\zeta_{23} =0$, $\zeta_{14} = 0$; $\implies$ $\Delta_{12} = \zeta_{12} \Delta_+$;
\item $\zeta_{24} =0$, $\zeta_{13} = 0$; $\implies$ $\Delta_{12} = \zeta_{12} \Delta_-$;
\end{enumerate}
\item $\zeta_{23}\zeta_{24} = 0$, $\zeta_{13}\zeta_{14} = 0$, $\Delta_{12} = 0$.  There are just two non-trivial components here:
\begin{align}
\{\zeta_{23} = 0,~\zeta_{14} = 0,~ \Delta_+ = 0\} \cup
\{\zeta_{24} = 0,~\zeta_{13} = 0,~ \Delta_- = 0 \}~.
\end{align}
\end{enumerate}
Case (d) leads to a continuum of $\sigma$ vacua, i.e. it contributes to the locus $\cA_+$; cases (a)--(c) have no SUSY vacua. \\

\noindent When $\zeta_{34} \neq 0$, we obtain a very similar story, where after a change of basis the QSCR become
\begin{align}
\sigma_1\sigma_2 & =q_2~,&
\zeta_{14}\zeta_{24} \sigma_1^4 - \Delta_{34} \sigma_1^2 + q_2^2 \zeta_{13} \zeta_{23} & = 0~,
\end{align}
with $\Delta_{34} = \zeta_{34}^2 q_1 + (\zeta_{13}\zeta_{24} + \zeta_{14}\zeta_{23}) q_2$.
The only new loci we obtain are $\P^2_1$ and $\P^2_2$ restricted to $\zeta_{34} \neq 0$. \\

\noindent Finally, when $\zeta_{12} = \zeta_{34} = 0$, then $P = 0$ reduces to $\zeta_{14} \zeta_{23} = \zeta_{13} \zeta_{24}$, so that points with $\zeta_{13} = 0$ belong to either $\P^2_1$ or $\P^2_3$.  If $\zeta_{13} \neq 0$, then the QSCR can be brought to the form
\begin{align}
\zeta_{23} \sigma_1^2 &= \zeta_{13} q_1~,&
\zeta_{14} \sigma_2^2 & = \zeta_{13} q_2~.
\end{align}
Non-SUSY configurations can therefore only be obtained if $\zeta_{23} = 0$ or if $\zeta_{14} =0$, and that means the point belongs to either $\P^2_2$ or $\P^2_4$.\\

\noindent  Reorganizing these points a bit leads to the result in the text.

\subsection{$X=dP_1$} \label{app:nonSUSYdP1}
 As for $\P^1\times\P^1$, whenever a column of the $E$ matrix vanishes, there are no SUSY vacua.  Thus $\cA_{\nSUSY}$ contains $\P^2_i$ defined by the simultaneous vanishing of $\zeta_{ij} = \zeta_{ik} =\zeta_{il} = 0$ for distinct $j,k,l \neq i$.  We will show that there are no additional points in $\cA_{\nSUSY}$.

Consider first the patch where $\zeta_{34} \neq 0$, where by a choice of basis
\begin{align}
E = \begin{pmatrix} \zeta_{14} /\zeta_{34} & \zeta_{24}/\zeta_{34} & 1 & 0 \\ -\zeta_{13}/\zeta_{34} & -\zeta_{23} /\zeta_{34} & 0 & 1
\end{pmatrix}~,
\end{align}
which with a small manipulation leads to the QSCR
\begin{align}
\sigma_1\sigma_2 & = q_2~,&
\zeta_{14}\zeta_{24} q_2 \sigma_1 & = \zeta_{34}^2 q_1 + (\zeta_{14}\zeta_{23} + \zeta_{13}\zeta_{24}) q_2 \sigma_2 -\zeta_{13}\zeta_{23} \sigma_2^3~.
\end{align}
For finite solutions $\sigma_a \in \C^\ast$, so that we may equivalently consider
\begin{align}
\sigma_1\sigma_2 & = q_2~,&
\zeta_{14}\zeta_{24} q_2^2 & = \zeta_{34}^2 q_1 \sigma_2+ (\zeta_{14}\zeta_{23} + \zeta_{13}\zeta_{24}) q_2 \sigma^2_2 -\zeta_{13}\zeta_{23} \sigma_2^4~,
\end{align}
and since $\zeta_{34}^2 q_1 \neq 0$, this has a non-zero solution unless
\begin{align}
\zeta_{14} \zeta_{24} &= 0~,&
\zeta_{13}\zeta_{23} &= 0~,&
\zeta_{14}\zeta_{23} + \zeta_{13}\zeta_{24} & = 0~.
\end{align}
Intersecting this with $P = \zeta_{12}\zeta_{34} +\zeta_{14}\zeta_{23} -\zeta_{13}\zeta_{24} = 0$,  it is easy to check that every solution belongs to some $\P^2_i$.\\

Next, consider the locus where $\zeta_{34} = 0$, but $\zeta_{14} \neq 0$.  Now
\begin{align}
E = \begin{pmatrix} 1 & \zeta_{24}/\zeta_{14} & 0 & 0 \\ 0 & \zeta_{12}/\zeta_{14} & \zeta_{13}/\zeta_{14} & 1 
\end{pmatrix}~,
\end{align}
which leads to the QSCR
\begin{align}
\zeta_{24} \sigma_1^2\sigma_2 +\zeta_{12} \sigma_1 \sigma_2^2 & = \zeta_{14}^2 q_1~,&
\zeta_{13} \sigma_2^2 & = \zeta_{14} q_2~.
\end{align}
A solution exists unless $\zeta_{13} = 0$ or $\zeta_{24} = \zeta_{12} = 0$.  In either case this belongs to a $\P^2_i$.\\

Finally, we consider $\zeta_{34} = 0$, $\zeta_{14} = 0$, and $\zeta_{24} \neq 0$.\footnote{If we instead set $\zeta_{24} = 0$, we obtain $\P^2_4$.}  This leads to
\begin{align}
E = \begin{pmatrix} 0 & 1 & 0 & 0 \\  -\zeta_{12}/\zeta_{24} & 0 &-\zeta_{13}/\zeta_{24} & 1 
\end{pmatrix}~,
\end{align}
and QSCR
\begin{align}
-\zeta_{12} \sigma_1\sigma_2^2 & = \zeta_{24} q_1~,&
-\zeta_{13} \sigma_2^2 & = \zeta_{24} q_2~.
\end{align}
There are solutions unless $\zeta_{12} = 0$ or $\zeta_{13} = 0$, and in either case this is a point in $\P^2_i$.  Hence, $\cA_{\nSUSY}$ is exactly the union of the $\P^2_i$.

\subsection{$X=\F_2$}\label{app:nonSUSYF2}
For our final example, we will not work out the general non-SUSY locus in $\Gr(2,4)\times(\C^\ast)^2$, but merely its intersection with the complement of the exceptional set $F_C$.  This is sufficient for describing the non-SUSY locus for the stability condition defined by the cone $C$ in figure~\ref{fig:stabforF2}.  For convenience we reproduce the exceptional set here:
\begin{align}
F_C = \{\zeta_{13} =0\}\cup \{\zeta_{14} =0,\zeta_{34} =0\} \cup \{\zeta_{12}=0,\zeta_{23}=0,\zeta_{24} = 0\} \subset \{D=0\}
\end{align}

Suppose, for starters, that in addition to $\zeta_{13} \neq 0$ we also have $\zeta_{12} \neq 0$.  In this case the QSCR
take the form
\begin{align}
\sigma_1\sigma_2 & = q_1~,&
\Delta_1 \sigma_1^4 - \Delta_2 q_1 \sigma_1^2 +\zeta_{13}\zeta_{14} q_1^2 & = 0~,
\end{align}
where
\begin{align}
\Delta_1 & = \zeta_{23}\zeta_{24} -\zeta_{12}^2 q_2~,&
\Delta_2 &  = \zeta_{13}\zeta_{24} + \zeta_{14}\zeta_{23}~.
\end{align}
The second equation is trivial, which would lead to points in $\cA_+$, if and only if
\begin{align}
\label{eq:inf}
\Delta_1 = \Delta_2 = \zeta_{13}\zeta_{14} = 0~.
\end{align}
Once we use $P=0$, it is easy to see that $\cA_+ \cap \{\zeta_{12} \neq 0\}$ is contained in $F_C$.

The other possibility is that the second equation has no solutions, which takes place on the following loci.
\begin{enumerate}[(i)]
\item $\Delta_1 = \Delta_2 = 0$.  Upon intersection with $P=0$, this is equivalent (in this patch) to
\begin{align}
\Delta'_1 =\zeta_{23}\zeta_{34}-2q_2 \zeta_{12}\zeta_{13} & = 0~,& \Delta_2 & = 0~,& P&=0.
\end{align}
\item $\Delta_1 = \zeta_{14} = 0$;
\item $\Delta_2 = \zeta_{14} = 0$.  The intersection of this locus with $P=0$ is contained in $F_C$.
\end{enumerate}
Next, we consider $\zeta_{12} = 0$ and $\zeta_{13} \neq 0$.  This leads to QSCR
\begin{align}
\zeta_{23} \sigma_1^2 & = \zeta_{13} q_1~,&
\zeta_{14}\sigma_2^2 -\zeta_{34}\sigma_1\sigma_2 & = \zeta_{13} q_2 \sigma_1^2~.
\end{align}
If $\zeta_{23} =0$ then $P=0$ implies $\zeta_{24} =0$, which is contained in $F_C$.  If $\zeta_{23}\neq0$ then solutions fail to exist if and only if $\zeta_{14} =\zeta_{34} =0$, but this again in $F_C$.

Putting this together, we see that the non-SUSY locus in the complement of $F_C$ is $\cA = \{P=0\} \cap \cA'$, with
\begin{align}
\cA' = \{\Delta'_1 = 0,~\Delta_2 = 0\} \cup
          \{\Delta'_1 = 0,~\zeta_{14}= 0\}~.
\end{align}



\providecommand{\href}[2]{#2}\begingroup\raggedright\endgroup

\end{document}